\documentclass[reqno,12pt]{amsart}
\usepackage{amssymb}
\usepackage{amsmath}
\usepackage{amsthm}
\usepackage{mathtools}
\usepackage{amsfonts}
\usepackage{xcolor}
\usepackage{graphicx}
\usepackage{tikz}
\usepackage{comment}
\usepackage{microtype}
\usepackage{enumitem}
\usepackage[
    colorlinks=true,
    linkcolor={blue},
    citecolor={blue},
    urlcolor={black!30!blue}
]{hyperref}
\usepackage[alphabetic,initials]{amsrefs}
\renewcommand{\MR}[1]{}

\title[Sharp Quantum Dynamics]{Sharp Logarithmic Quantum Dynamics for Quasiperiodic Schr\"odinger Operators}

\author[W.\ Liu]{Wencai Liu}
\address{[W.\ Liu] Department of Mathematics, Texas A\&M University, College Station, TX 77843, USA}
\email{\href{mailto:wencail@tamu.edu}{wencail@tamu.edu}}

\author[X.\ Wang]{Xueyin Wang}
\address{[X.\ Wang] Department of Mathematics, Texas A\&M University, College Station, TX 77843, USA}
\email{\href{mailto:xueyin@tamu.edu}{xueyin@tamu.edu}}

\theoremstyle{plain}
\newtheorem{theorem}{Theorem}[section]

\newtheorem{corollary}[theorem]{Corollary}
\newtheorem{lemma}[theorem]{Lemma}
\newtheorem{proposition}[theorem]{Proposition}

\theoremstyle{definition}
\newtheorem{definition}{Definition}[section]

\newtheorem{remark}[theorem]{Remark}

\numberwithin{equation}{section}

\begin{document}

\begin{abstract}
    Dynamical localization requires all position moments of a quantum wavepacket to remain bounded in time, but for quasiperiodic Schr\"odinger operators such bounds are generally not uniform in phase. In the positive Lyapunov exponent regime, the best known phase-uniform estimates instead grow on a logarithmic scale. We prove that both the logarithmic scale and the dependence on the moment order are sharp for a class of one-frequency quasiperiodic Schr\"odinger operators with even potentials. Our main ingredient is a reflective version of semi-uniformly localized eigenfunctions, adapted to the two localization centers forced by a completely resonant phase, from which we obtain matching logarithmic lower bounds along sequences of times.
\end{abstract}

\maketitle	
\section{Introduction}

In this paper, we study the spreading of wavepackets under the Schr\"odinger evolution. We consider the discrete Schr\"odinger operator on $\ell^{2}(\mathbb{Z})$ given by
\begin{equation}\label{H}
    [H\psi](n)=\psi(n+1)+\psi(n-1)+V(n)\psi(n).
\end{equation}
Throughout the introduction, $v$ denotes a real-valued sampling function on $\mathbb{T}$, while $V$ denotes the real-valued potential sequence; in the quasiperiodic setting,
\begin{equation*}
    V(n)=v(\theta+n\alpha).
\end{equation*}
For $p>0$, the spreading of a wavepacket initially localized at the origin is measured by the $p$-th moment of the position operator,
\begin{equation}\label{moment}
    \langle |X_{H}|^{p}\rangle (t)=\sum_{n\in\mathbb{Z}}|n|^{p}|\langle\delta_{n},e^{-itH}\delta_{0}\rangle|^{2},
\end{equation}
and its Abel average version,
\begin{equation}\label{avmoment}
    \langle |\widetilde{X}_{H}|^{p}\rangle(T)=\frac{2}{T}\int_{0}^{\infty}e^{-2t/T}  \langle |X_{H}|^{p}\rangle (t) \,\mathrm{d}t.
\end{equation}
The long-time behavior of \eqref{moment} and \eqref{avmoment} reflects the propagation speed of the diffusion \cite{MR4070305}. For further details, we refer the readers to the monographs of Damanik and Fillman \cite{MR4567742,MR4840232}.

\subsection{From SULE to reflective SULE}
Anderson localization describes the insulating behavior of a particle in a disordered system, characterized by the pure point spectrum with exponentially decaying eigenfunctions. It has been established for broad classes of Schr\"odinger operators in \cite{MR1740982,MR1815703,MR4181827,MR3707287,MR4108613,MR3779957}. More precisely, Anderson localization provides a complete orthonormal basis of eigenfunctions $\{\phi_{s}\}_{s\in\mathbb{Z}}$ such that, for every $s\in\mathbb{Z}$, there exist a localization center $m_{s}\in\mathbb{Z}$ and constants $C_{s}>0$ and $\gamma_{s}>0$ satisfying
\begin{equation*}
    |\phi_{s}(n)|\leqslant C_{s}e^{-\gamma_{s}|n-m_{s}|}
\end{equation*}
for every $n\in\mathbb{Z}$. This estimate controls each eigenfunction separately.

Dynamical localization is a  property concerning the full Schr\"odinger evolution. It asserts that an initially localized wavepacket remains localized uniformly in time. The operator $H$ exhibits dynamical localization if for every $p>0$,
\begin{equation*}
    \sup_{t\in\mathbb{R}}\langle |X_{H}|^{p}\rangle (t)<\infty.
\end{equation*}
Various forms of dynamical localization for Schr\"odinger operators have been established in \cite{MR4637128,MR2100420,MR4216568}.

Anderson localization does not imply dynamical localization in general \cite{MR1428099}. Indeed, the spectral expansion of $e^{-itH}$ requires uniform control of the entire eigenbasis. To formulate such control, del Rio, Jitomirskaya, Last, and Simon introduced the notion of semi-uniformly localized eigenfunctions (SULE) in \cite{MR1428099}.

\begin{definition}[SULE]
Let $H$ be a self-adjoint operator on $\ell^{2}(\mathbb{Z})$ with a complete orthonormal basis of eigenfunctions $\{\phi_{s}\}_{s\in\mathbb{Z}}$. We say that $H$ has SULE if there exists $\gamma>0$ such that, for every $\varepsilon>0$, there exists $C_{\varepsilon}>0$ satisfying
\begin{equation*}
    |\phi_{s}(n)|\leqslant C_{\varepsilon}e^{\varepsilon|m_{s}|}e^{-\gamma|n-m_{s}|}.
\end{equation*}
\end{definition}

The factor $e^{\varepsilon|m_{s}|}$ allows a subexponential growth with respect to the localization center, while the decay rate $\gamma$ is uniform over the entire eigenbasis.  It has been shown that SULE implies dynamical localization \cite{MR1428099}.

The one-center structure in the definition of SULE is essential: it describes an eigenfunction with one peak and exponential decay away from that peak, as illustrated in Figure~\ref{fig:localization-centers}(a). Jitomirskaya, Liu, and Mi \cite{jitomirskaya2024sharp} showed that phase resonances for quasiperiodic operators with even sampling functions can rule out SULE. To see geometrically why such a resonance obstructs SULE, first consider a potential   satisfying
\begin{equation*}
    V(n)=V(1-n),\qquad n\in\mathbb{Z}.
\end{equation*}
Then the corresponding Schr\"odinger operator is invariant under the reflection $n\mapsto1-n$. Since one-dimensional discrete Schr\"odinger operators have simple eigenvalues, every eigenfunction $\psi$ satisfies
\begin{equation*}
    |\psi(n)|=|\psi(1-n)|,\qquad n\in\mathbb{Z}.
\end{equation*}
Consequently, if $m_{s}$ is a global maximum of $|\phi_{s}|$, then $1-m_{s}$ is also a global maximum. Thus the eigenfunction has two reflection-related localization centers. Across the eigenbasis, the distance between these centers need not remain bounded, so a uniform one-center SULE estimate cannot capture this geometry. Figure~\ref{fig:localization-centers}(b) illustrates the resulting two-center profile.

This reflection symmetry arises naturally at completely resonant phases. Let $v\in C(\mathbb{T},\mathbb{R})$ be even and consider the quasiperiodic operator
\begin{equation*}
    [H_{v,\alpha,\theta}\psi](n) = \psi(n+1) + \psi(n-1) + v(\theta+n\alpha)\psi(n).
\end{equation*}
A phase is called completely resonant when $2\theta\in\alpha\mathbb{Z}+\mathbb{Z}$. In this paper, our starting operator corresponds to
\begin{equation}\label{halfalpha}
    \theta_{0}=-\frac{\alpha}{2}.
\end{equation}
Writing $V(n)=v(\theta_{0}+n\alpha)$, the evenness of $v$ gives
\begin{equation*}
    V(n)=V(1-n),\qquad n\in\mathbb{Z}.
\end{equation*}
Thus the completely resonant operator realizes precisely the two-center geometry described above. This motivates the following reflective version of SULE.

\begin{figure}[t]
\centering
\begin{minipage}[t]{0.47\textwidth}
\centering
\begin{tikzpicture}[x=0.42cm,y=1.45cm]
    \draw[->] (-5.7,0) -- (5.7,0) node[right] {$n$};
    \draw[->] (0,0) -- (0,1.18) node[above] {$|\psi(n)|$};
    \draw[blue,thick] plot coordinates {(-5,0.04) (-4,0.08) (-3,0.16) (-2,0.32) (-1,0.58) (0,1) (1,0.58) (2,0.32) (3,0.16) (4,0.08) (5,0.04)};
    \foreach \x/\y in {-5/0.04,-4/0.08,-3/0.16,-2/0.32,-1/0.58,0/1,1/0.58,2/0.32,3/0.16,4/0.08,5/0.04}
        \fill[blue] (\x,\y) circle (1.2pt);
    \node[below] at (0,0) {$m_s$};
\end{tikzpicture}

\small (a) SULE: one localization center.
\end{minipage}
\hfill
\begin{minipage}[t]{0.49\textwidth}
\centering
\begin{tikzpicture}[x=0.36cm,y=1.45cm]
    \draw[->] (-6.7,0) -- (7.7,0) node[right] {$n$};
    \draw[->] (0.5,0) -- (0.5,1.18) node[above] {$|\psi(n)|$};
    \draw[red,thick] plot coordinates {(-6,0.10) (-5,0.22) (-4,0.48) (-3,1) (-2,0.48) (-1,0.22) (0,0.10) (1,0.10) (2,0.22) (3,0.48) (4,1) (5,0.48) (6,0.22) (7,0.10)};
    \foreach \x/\y in {-6/0.10,-5/0.22,-4/0.48,-3/1,-2/0.48,-1/0.22,0/0.10,1/0.10,2/0.22,3/0.48,4/1,5/0.48,6/0.22,7/0.10}
        \fill[red] (\x,\y) circle (1.2pt);
    \draw[dashed,gray] (-3,0) -- (-3,1.05);
    \draw[dashed,gray] (4,0) -- (4,1.05);
    \node[below] at (-3,0) {$m_s$};
    \node[below] at (4,0) {$1-m_s$};
\end{tikzpicture}

\small (b) Reflective SULE: two localization centers.
\end{minipage}
\caption{One center vs two centers: exact reflection symmetry forces the two centers in (b) to have the same height.}
\label{fig:localization-centers}
\end{figure}
\begin{definition}[Reflective SULE]\label{reflectivelocalization}
Let $v\in C(\mathbb{T},\mathbb{R})$ be even and let $\alpha\in\mathbb{R}\setminus\mathbb{Q}$. Set $H_{0}\coloneqq H_{v,\alpha,-\alpha/2}$. We say that $H_{0}$ has reflective SULE if
\begin{enumerate}
    \item The operator $H_{0}$ has a complete orthonormal basis of real eigenfunctions $\{\phi_{s}\}_{s\in\mathbb{Z}}$ with corresponding eigenvalues $\{E_{s}\}_{s\in\mathbb{Z}}$.
    \item For every $s\in\mathbb{Z}$, one can choose a global maximum point $m_{s}\leqslant0$ of $|\phi_{s}|$. There exist $0<\epsilon<\gamma$ and  $C_{\epsilon}>0$ such that 
    \begin{equation}\label{criteriondecay}
        |\phi_{s}(n)|\leqslant C_{\epsilon}e^{\epsilon|m_{s}|}e^{-\gamma d_{s}(n)},
    \end{equation}
    where
    \begin{equation}\label{defds}
        d_{s}(n)\coloneqq\min\big\{|n-m_{s}|,|n-(1-m_{s})|\big\}.
    \end{equation}
\end{enumerate}
\end{definition}
The quantity $d_{s}(n)$ defined in \eqref{defds} represents the distance from $n$ to the reflection pair $\{m_{s},1-m_{s}\}$. Unlike SULE, reflective SULE is more flexible and does not require \eqref{criteriondecay} to hold for every $\epsilon>0$, but only for some fixed constants $0<\epsilon<\gamma$ uniformly over the entire eigenbasis.

\subsection{Sharp quantum dynamics}
Two widely studied forms of localization are Anderson localization and dynamical localization. The latter requires the moments in \eqref{moment} to remain bounded in time. For quasiperiodic operators, many localization results hold for almost every phase $\theta$, but such phase-by-phase conclusions do not give a bound uniform in $\theta$; see, for example, \cite{MR1740982,MR1815703,MR2100420,MR4181827,MR4216568,MR4637128}. If one requires uniformity in phase, the known bounds in the positive Lyapunov exponent regime are instead logarithmic. This leads to a natural question: is the logarithmic scale sharp, or could it be improved, for instance, to a power of $\ln\ln T$? It is also natural to ask whether the dependence on the moment order $p$ is sharp. We show that, for a family of even quasiperiodic potentials, both the logarithmic scale and the exponent $p$ are sharp.

A general approach to lower bounds, introduced by Last \cite{MR1423040}, relates quantum dynamics to the Hausdorff dimensional properties of spectral measures. In the zero-dimensional regime, Damanik and Landrigan \cite{MR1963769} established a criterion for continuity with respect to logarithmic Hausdorff measures and applied it to Sturmian operators, obtaining logarithmic quantum dynamical lower bounds; see also \cite{MR4578350} for a generalization. More recently, Jitomirskaya and Zhang \cite{MR4404788} introduced a quantitative almost periodicity condition and proved lower spectral dimension and quantum dynamical bounds for potentials with sufficiently strong repetitions.

These spectral dimensional arguments use subordinacy theory to relate dynamics to eigenfunction norm growth. Our reflective SULE provides further structural information about the localization centers. Consequently, the double-center mechanism yields a lower bound on the propagation speed of wavepackets.

For convenience, we write $\langle |X_{\theta}|^{p}\rangle$ for the $p$-th moment associated with $H_{v,\alpha,\theta}$ and $\langle |\widetilde{X}_{\theta}|^{p}\rangle$ for its Abel average.  We first establish a general criterion for logarithmic lower bounds based on reflective SULE. 

\begin{theorem}\label{criteriontheorem}
Let $v\in C(\mathbb{T},\mathbb{R})$ be even and $\alpha\in\mathbb{R}\setminus\mathbb{Q}$. Assume that $H_{v,\alpha,-\alpha/2}$ has reflective SULE. Then there exists a dense set of phases $\Theta\subseteq\mathbb{T}$ such that, for every $\theta\in\Theta$, there are sequences $T_{j}\to\infty$, $t_{j}\to\infty$ for which, for every $p>0$,
\begin{equation}\label{criterionmoment}
    \begin{split}
        \langle |X_{\theta}|^{p}\rangle(t_{j})&\geqslant c_{p}(\ln t_{j})^{p},\\
        \langle |\widetilde{X}_{\theta}|^{p}\rangle(T_{j})&\geqslant c_{p}(\ln T_{j})^{p}.
    \end{split}
\end{equation}
\end{theorem}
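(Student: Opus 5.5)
The plan is to take phases near the orbit of the resonant phase, $\theta_{0}+k\alpha$, and glue them at faster and faster scales into a single Liouville-type phase. Throughout, write $H_{0}=H_{v,\alpha,-\alpha/2}$ with eigenbasis $\{\phi_{s}\}$ as in Definition~\ref{reflectivelocalization}.

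\textbf{Step 1 (translation).} Since $v(\theta_{0}+k\alpha+n\alpha)=V(n+k)$, the operator $H_{v,\alpha,\theta_{0}+k\alpha}$ is unitarily conjugate to $H_{0}$ by the shift. Hence its evolution from $\delta_{0}$ is the $H_{0}$-evolution from $\delta_{k}$, with positions recentred at $k$.

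\textbf{Step 2 (long-time average from $\delta_{k}$).} The spectrum of $H_{0}$ is simple and pure point. Therefore
\[
\lim_{T\to\infty}\frac{2}{T}\int_{0}^{\infty}e^{-2t/T}|\langle\delta_{n},e^{-itH_{0}}\delta_{k}\rangle|^{2}\,\mathrm{d}t=\sum_{s}|\phi_{s}(k)|^{2}|\phi_{s}(n)|^{2}.
\]
Reflection symmetry gives $|\phi_{s}(n)|=|\phi_{s}(1-n)|$, so each $\phi_{s}$ puts equal mass near $m_{s}$ and near $1-m_{s}$. By \eqref{criteriondecay}, if $|\phi_{s}(k)|^{2}$ is not negligible, then $d_{s}(k)\leqslant(\epsilon|m_{s}|+C)/\gamma$. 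This is where $\epsilon<\gamma$ is used: it forces the distance between $k$ and the far center to satisfy $|k-(1-k')|\gtrsim(1-\epsilon/\gamma)|k|$, where $k'$ is the center near $k$. Summing $|n-k|^{p}$ over $n$ near the far center, and using $\sum_{s}|\phi_{s}(k)|^{2}=1$, the limiting averaged $p$-th moment from $\delta_{k}$ is at least $c_{p}|k|^{p}$ for $|k|$ large.

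\textbf{Step 3 (quantitative time scale).} We need the limit in Step 2 to be nearly attained already at $T_{k}\sim e^{C|k|}$. Split the off-diagonal terms of the Abel average. Terms with $|E_{s}-E_{s'}|\geqslant e^{-C|k|}$ are damped by the factor $(1+T^{2}(E_{s}-E_{s'})^{2}/4)^{-1}$. The only dangerous near-degeneracies are between eigenfunctions with nearby centers. I expect this to be the main obstacle. We need a lower bound $e^{-C|k|}$ on eigenvalue gaps among the finitely many eigenfunctions carrying the mass of $\delta_{k}$. I would try to derive it from the uniform decay in \eqref{criteriondecay}. Two orthogonal eigenfunctions, both concentrated in the same window of size $O(|k|)$ and with eigenvalues $\delta$-close, would make a near-eigenvector of a box restriction. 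One can then argue by contradiction using transfer-matrix bounds $\|A_{n}\|\leqslant e^{C n}$, which give a Wronskian lower bound. If this is too delicate, I would instead pass directly to a time $t$ chosen by averaging.

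\textbf{Step 4 (perturbation and gluing).} Finite propagation gives
\[
\|e^{-itH_{\theta}}\delta_{0}-e^{-itH_{\theta'}}\delta_{0}\|\leqslant t\sup_{|n|\leqslant R}|v(\theta+n\alpha)-v(\theta'+n\alpha)|+(\text{tail beyond }R).
\]
The tail is superexponentially small in $R$ for $R\gg t$. So if $|\theta-(\theta_{0}+k\alpha)|$ is small enough in terms of $e^{-C'|k|}$ and the modulus of continuity of $v$, the moment bound at time scale $T_{k}$ transfers to $H_{\theta}$.

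Now build the phase. Fix a target point and a tolerance. Choose $k_{1}$ with $\theta_{0}+k_{1}\alpha$ in the target neighbourhood. Inductively choose $k_{j+1}$, with $|k_{j+1}|$ enormous, such that $\theta_{0}+k_{j+1}\alpha$ lies within the tolerance required at scale $k_{j}$; this is possible by density of $\{k\alpha\}$. Since the induction forces $|k_{j+1}|\to\infty$, the phase $\theta=\lim_{j}(\theta_{0}+k_{j}\alpha)$ satisfies the closeness condition at every scale. Note that the origin for $H_{\theta}$ corresponds, via translation, to $\delta_{k_{j}}$ for the $j$-th comparison operator. Taking $T_{j}=T_{k_{j}}$ gives $\ln T_{j}\asymp|k_{j}|$, hence $\langle|\widetilde X_{\theta}|^{p}\rangle(T_{j})\geqslant c_{p}(\ln T_{j})^{p}$. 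Density of $\Theta$ follows because the starting neighbourhood was arbitrary.

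\textbf{Step 5 (from Abel average to pointwise times).} Moments up to time $t$ are at most $e^{Ct}$ by ballistic bounds. The Abel average is then dominated by times $t\in[T_{j}/(\ln T_{j})^{2},\,T_{j}\ln T_{j}]$, up to negligible error. So some $t_{j}$ in this window satisfies $\langle|X_{\theta}|^{p}\rangle(t_{j})\geqslant c_{p}(\ln t_{j})^{p}$. To make one sequence $t_{j}$ serve all $p$ simultaneously, I would pick $t_{j}$ via the probability mass beyond distance $c|k_{j}|$, which is $p$-independent, rather than via the moment itself.
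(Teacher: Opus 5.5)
\textbf{There is a genuine gap, in Step~3.} Your overall architecture is the paper's: conjugate the completely resonant operator by a shift, get a bound at time scale $e^{Ck}$, glue the phases into a Liouville-type limit, transfer by Duhamel, and choose $t_j$ through a $p$-independent tail mass. Step~3, however, carries essentially all of the quantitative content, and you leave it as a hope. Two things are missing there.

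\emph{Truncation.} You presuppose that $\delta_k$ lives, up to $e^{-c|k|}$, on finitely many eigenfunctions with non-negligible weight at $k$. This needs a center-counting bound $\#\{s:-R\le m_s\le 0\}\le 10R$, obtained from \eqref{criteriondecay} and completeness as in Lemma~\ref{count}, together with the tail estimate of Lemma~\ref{tail}. That is where $\epsilon<\gamma$ is genuinely used. In Step~2 it is not needed: $\sum_{n\ge1}|\phi_s(n)|^2=\frac12$ for every $s$ already gives the limiting bound $\frac12(|k|+1)^p$.

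\emph{Eigenvalue separation.} You damp the off-diagonal terms of a general observable (the moment, or the mass beyond distance $c|k|$). That requires $|E_s-E_{s'}|\ge e^{-C|k|}$ for \emph{all} relevant pairs, same parity included, and you do not prove it. Your fallback of choosing $t$ by averaging does not help, because without separation you have no control on the time scale at all.

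The idea that makes this step easy in the paper is to measure mass on the reflected half-line. Take the projection $Q$ onto $n\ge 1$ and evolve $\delta_{-k}$ under $H_0$.
\begin{itemize}
\item By Lemma~\ref{eigenbasis}, $\langle\phi_\ell,Q\phi_s\rangle=\frac12\langle\phi_\ell,\phi_s\rangle=0$ whenever $s\neq\ell$ and $\sigma_s=\sigma_\ell$. So same-parity cross terms vanish identically, and the diagonal part is exactly $\frac12\sum_{s\in S_{k,\rho}}|\phi_s(-k)|^2$.
\item For opposite parities, summing the Wronskian identity over $n\ge 1$ gives $2|\phi_s(0)\phi_\ell(0)|=|W(0)|\le\frac12|E_s-E_\ell|$.
\item Transfer matrices from $-k$ to $0$ give $|\phi_s(0)|\ge B^{-k}e^{-\rho k/2}/\sqrt2$. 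Hence $|E_s-E_\ell|\ge 2B^{-2k}e^{-\rho k}$ (Lemma~\ref{gap}).
\end{itemize}

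Your general-gap route can in fact be completed. If $|E_s-E_{s'}|=\delta$, then $|W(n)|\le\delta$ for all $n$. Hence $\Phi_{s'}(n)=c\,\Phi_s(n)+O(kB^{O(k)}e^{\rho k/2}\delta)$ on a window of length $O(k)$. Since $|m_s|,|m_{s'}|=O(k)$, reflective SULE puts all but $e^{-k}$ of both masses in that window. This contradicts orthogonality unless $\delta\ge e^{-Ck}$. As written, though, this is an unproved step, not a proof.

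Two minor points. In Step~4, finite propagation is unnecessary, since $\|H_\theta-H_{\theta'}\|\le\omega_v(\|\theta-\theta'\|_{\mathbb{T}})$ uniformly in $n$. In Step~5, an $e^{Ct}$ moment bound would make the Abel integral diverge. Work only with the bounded tail-mass observable, as you eventually propose.
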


Establishing upper bounds has attracted substantial attention. Various approaches, including transfer matrix methods, large deviation estimates, semi-algebraic techniques, and finite-volume Green's function estimates, have yielded $T^{\varepsilon}$ and $(\ln T)^{C}$ bounds for broad classes of ergodic and quasiperiodic operators \cite{MR3869380,MR4564259,MR4604835,MR4944474,MR5026195,MR5029966,MR4288185,MR2291919,MR5000719,Liouville}. These results include long-range operators, higher-dimensional lattice models, multi-frequency shifts, and skew-shifts. Recall that a frequency $\alpha$ is said to be Diophantine, denoted by $\alpha\in\mathrm{DC}$, if there exist constants $\gamma>0$ and $\tau>1$ such that
\begin{equation*}
    \|k\alpha\|_{\mathbb{T}}\geqslant \frac{\gamma }{|k|^{\tau}}.
\end{equation*}
for every $k\neq 0$. In the positive Lyapunov exponent regime, results from \cite{MR4604835,MR4546503,MR5029966} yield the following upper bounds, uniformly in the phase, for $\alpha\in\mathrm{DC}$ and $v\in C^{\omega}(\mathbb{T},\mathbb{R})$:
\begin{equation}\label{upper}
    \begin{split}
        \sup_{\theta\in\mathbb{T}} \langle |X_{\theta}|^{p}\rangle(t)&\leqslant C_{p,\varepsilon}(\ln t)^{p+\varepsilon},\\
        \sup_{\theta\in\mathbb{T}} \langle |\widetilde{X}_{\theta}|^{p}\rangle(T)&\leqslant C_{p,\varepsilon}(\ln T)^{p+\varepsilon}.
    \end{split}
\end{equation}
Combining these upper bounds with \hyperref[criteriontheorem]{Theorem~\ref*{criteriontheorem}} shows that both the logarithmic scale and the exponent $p$ are sharp.

\begin{theorem}\label{sharpthm}
    Let $\alpha\in \mathrm{DC}$ and $v\in C^{\omega}(\mathbb{T},\mathbb{R})$ be even. Assume the Lyapunov exponent is positive on the spectrum and $H_{v,\alpha,-\alpha/2}$ has reflective SULE. Then for every $p>0$,
\begin{equation}\label{sharpexponent}
    \limsup_{t\to\infty}\frac{\ln \sup_{\theta}\langle |X_{\theta}|^{p}\rangle(t)}{p\ln \ln t} = \limsup_{T\to\infty}\frac{\ln\sup_{\theta}\langle |\widetilde{X}_{\theta}|^{p}\rangle(T)}{p\ln \ln T} = 1.
\end{equation}
\end{theorem}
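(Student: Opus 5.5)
Theorem~\ref{sharpthm} follows by combining the phase-uniform upper bounds \eqref{upper} with the lower bounds of Theorem~\ref{criteriontheorem}. We need to show that each $\limsup$ in \eqref{sharpexponent} is at most $1$ and at least $1$. We treat the moment $\langle |X_\theta|^p\rangle(t)$; the Abel average is handled the same way, with $T_j$ in place of $t_j$.

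\textbf{Upper bound.} Since $\alpha\in\mathrm{DC}$, $v$ is analytic and the Lyapunov exponent is positive on the spectrum, \eqref{upper} applies. For every $\varepsilon>0$ and large $t$ it gives
\begin{equation*}
\ln \sup_\theta \langle |X_\theta|^p\rangle(t)\leqslant \ln C_{p,\varepsilon}+(p+\varepsilon)\ln\ln t .
\end{equation*}
Dividing by $p\ln\ln t$ and letting $t\to\infty$ shows the $\limsup$ is at most $1+\varepsilon/p$. Letting $\varepsilon\to0$ gives $\limsup\leqslant 1$.

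\textbf{Lower bound.} The hypothesis that $H_{v,\alpha,-\alpha/2}$ has reflective SULE is exactly what Theorem~\ref{criteriontheorem} requires. It supplies a dense set $\Theta$. Fix one $\theta\in\Theta$ with its sequences $t_j\to\infty$ and $T_j\to\infty$. Since $\sup_{\theta'}\langle|X_{\theta'}|^p\rangle(t_j)\geqslant \langle|X_\theta|^p\rangle(t_j)$, \eqref{criterionmoment} gives
\begin{equation*}
\ln\sup_{\theta'}\langle|X_{\theta'}|^p\rangle(t_j)\geqslant \ln c_p+p\ln\ln t_j .
\end{equation*}
Dividing by $p\ln\ln t_j$ and letting $j\to\infty$ (so that $\ln\ln t_j\to\infty$ and the constant $\ln c_p$ is negligible) gives $\limsup\geqslant1$ along $t_j$, hence for the full $\limsup$.

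The only point requiring care is the compatibility of hypotheses: the upper bound \eqref{upper} must hold for all $\theta\in\mathbb{T}$ under exactly the stated assumptions (Diophantine $\alpha$, analytic $v$, positive Lyapunov exponent on the spectrum). We take this from the cited results \cite{MR4604835,MR4546503,MR5029966}. All the real difficulty lies in Theorem~\ref{criteriontheorem}, which we are allowed to assume here.
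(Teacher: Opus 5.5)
Your proposal is correct and is essentially the paper's argument. The lower bound $\geqslant 1$ comes from Theorem~\ref{criteriontheorem} along $t_j$ (resp.\ $T_j$) at a fixed phase $\theta\in\Theta$, and the upper bound $\leqslant 1+\varepsilon/p$ comes from the phase-uniform estimate \eqref{upper}, after which one lets $\varepsilon\to0$.
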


As an application, we obtain the sharp quantum dynamics for the supercritical almost Mathieu operator
\begin{equation}\label{amo}
    [H_{\lambda,\alpha,\theta}\psi](n)=\psi(n+1)+\psi(n-1)+2\lambda\cos2\pi(\theta+n\alpha)\psi(n),
\end{equation}
where $\lambda>1$ is the coupling constant.  

\begin{theorem}\label{amothm}
    Let $\alpha\in\mathrm{DC}$ and $\lambda>1$. Then there exists a dense set of phases $\Theta\subseteq\mathbb{T}$ such that, for every $\theta\in\Theta$, there are sequences $T_{j}\to\infty$, $t_{j}\to\infty$ for which \eqref{criterionmoment} holds. Moreover,  for every $p>0$,
\begin{equation}\label{sharpexponentliu}
    \limsup_{t\to\infty}\frac{\ln \sup_{\theta}\langle |X_{\theta}|^{p}\rangle(t)}{p\ln \ln t} = \limsup_{T\to\infty}\frac{\ln\sup_{\theta}\langle |\widetilde{X}_{\theta}|^{p}\rangle(T)}{p\ln \ln T} = 1.
\end{equation}
\end{theorem}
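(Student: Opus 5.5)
Theorem~\ref{amothm} will be reduced to Theorems~\ref{criteriontheorem} and~\ref{sharpthm} for the even sampling function $v(x)=2\lambda\cos2\pi x$. Three hypotheses must be checked. First, $v$ is even and real analytic. Second, the Lyapunov exponent is positive on the spectrum: by Herman's subharmonicity argument (or the Bourgain--Jitomirskaya formula) it equals $\ln\lambda>0$ for every energy in the spectrum when $\lambda>1$. Third, the completely resonant operator $H_{\lambda,\alpha,-\alpha/2}$ must have reflective SULE. Once these hold, Theorem~\ref{criteriontheorem} gives the dense set $\Theta$ and the lower bounds \eqref{criterionmoment}. Theorem~\ref{sharpthm}, which combines those lower bounds with the phase-uniform upper bounds \eqref{upper} valid for $\alpha\in\mathrm{DC}$ and analytic $v$ in the positive Lyapunov regime, then yields \eqref{sharpexponentliu}. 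So the only substantial step is reflective SULE at $\theta_0=-\alpha/2$.

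For reflective SULE I would follow the Jitomirskaya--Liu--Mi analysis \cite{jitomirskaya2024sharp} of the supercritical AMO at resonant phases, for Diophantine $\alpha$.

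\textbf{Step 1 (completeness).} Anderson localization holds at $\theta_0$: with $\alpha\in\mathrm{DC}$, the only obstruction to Jitomirskaya's localization proof is phase resonance. For $\theta_0=-\alpha/2$ the resonances $\|2\theta_0+k\alpha\|$ small occur only near $k=1$, and this is handled by the reflection symmetry. So $H_0$ has a complete orthonormal basis of eigenfunctions, which can be taken real.

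\textbf{Step 2 (choice of center).} Since $|\phi_s(n)|=|\phi_s(1-n)|$, every eigenfunction has a global maximum at some $m_s\le 0$.

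\textbf{Step 3 (decay estimate).} Run the block-resolvent (Poisson formula) argument with Lagrange interpolation of the determinants $P_k(\theta)$ at scale $k\sim|n-m_s|$. For a point $n$ far from both $m_s$ and $1-m_s$, the interval of length $\sim d_s(n)$ around $n$ is regular. The reason is that a singular interval containing $n$ must be paired, via the uniform non-existence of two independent singular boxes at Diophantine frequency, with one containing $m_s$. The only exceptional partner is the reflected box near $1-m_s$, which cannot be excluded because $\theta_0$ is resonant. This yields decay at rate $\gamma=\ln\lambda-\delta$ in $d_s(n)$.

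\textbf{Step 4 (prefactor).} Normalizing at $m_s$ and tracking how the Diophantine small-divisor losses accumulate gives the prefactor $C_\epsilon e^{\epsilon|m_s|}$. Here $\epsilon$ can be made proportional to $\delta$ and hence smaller than $\gamma$.

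The main obstacle is Step~3, specifically the uniformity in $s$ and the fact that $m_s$ and $1-m_s$ may be arbitrarily far apart. I would handle it by taking the scales $k$ adapted to $|m_s|$ and proving a resonance-elimination lemma. The lemma should say that for $\theta_0=-\alpha/2$, the points $\theta_0+j\alpha$ and $\theta_0+j'\alpha$ are close modulo reflection only when $j+j'\approx 1$. This follows from $\|(j+j'-1)\alpha\|\ge\gamma|j+j'-1|^{-\tau}$. All such exceptional pairs are exactly the reflection pairs, so they are absorbed into $d_s$.

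Finally, if the paper later isolates reflective SULE for the AMO as a lemma, I would simply invoke it and spend the effort only on checking that Theorems~\ref{criteriontheorem} and~\ref{sharpthm} apply.
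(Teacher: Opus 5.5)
Your reduction is the same as the paper's: everything follows from Theorems~\ref{criteriontheorem} and~\ref{sharpthm} once $H_{\lambda,\alpha,-\alpha/2}$ has reflective SULE. (For positivity of $L$, Herman's bound $L\geqslant\ln\lambda$ suffices; equality on the spectrum is Bourgain--Jitomirskaya.)

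The gap is your Step~3. You argue that a box of length $\sim d_s(n)$ around $n$ is regular because, via Lagrange interpolation, it cannot be singular together with a box at $m_s$, with the reflected box at $1-m_s$ the only exception. But whenever $d_s(n)$ is large compared with $|m_s|$, a box of that length around $m_s$ straddles the reflection point $\tfrac12$. This is not a corner case: it holds for the tail of every eigenfunction. The interpolation nodes such a box contributes, $\cos2\pi(\theta_0+c\alpha)$ with $c$ running over box centers, then coincide in pairs $c+c'=1$, because $\theta_0+(1-c)\alpha=-(\theta_0+c\alpha)$. Consequences:
\begin{itemize}
\item the box at $m_s$ is not uniform by itself;
\item the box at $1-m_s$ is not a separate partner, since it overlaps the box at $m_s$;
\item after discarding duplicates you no longer have the $k+1$ distinct nodes the interpolation needs.
\end{itemize}
Your resonance-elimination lemma, $\|(j+j'-1)\alpha\|_{\mathbb{T}}\geqslant\gamma|j+j'-1|^{-\tau}$, controls only pairs with $j+j'\neq1$. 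It says nothing about these exact coincidences, so ``absorbed into $d_s$'' is not an argument. Overcoming exactly this difficulty is the content of the Jitomirskaya--Liu reflective-hierarchical analysis behind Theorem~\ref{JLM}; it is not a routine modification of the non-resonant proof. Likewise, your Step~1 is a separate result, Theorem~\ref{AL}.

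The paper does not re-derive any of this. It cites Theorem~\ref{AL} for the eigenbasis and Theorem~\ref{JLM} for the two-center bound at scale $k$. It notes that $k_0=1$ because $\sin\pi(2\theta_0+\alpha)=0$, so the second center is exactly $1-m_s$. What remains is the bookkeeping your Step~4 only gestures at:
\begin{enumerate}
\item For each pair $(s,n)$, apply Theorem~\ref{JLM} with $\varepsilon=\epsilon/2$ at the adapted scale $k=\max\{K,|m_s|,\lfloor|n|/3\rfloor+1\}$.
\item Since $|n|\leqslant|m_s|+d_s(n)+1$, one has $k\leqslant K+2+2|m_s|+d_s(n)$. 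So the loss $e^{\epsilon k/2}$ splits into the admissible prefactor $e^{\epsilon|m_s|}$ and a factor $e^{\epsilon d_s(n)/2}$ that only erodes the rate.
\item With $\|U^{\phi_s}(\cdot)\|\leqslant\sqrt2$ this gives $|\phi_s(n)|\leqslant 2e^{\epsilon(K+2)/2}e^{\epsilon|m_s|}e^{-(\ln\lambda-\epsilon)d_s(n)}$.
\end{enumerate}
This is reflective SULE with $\gamma=\ln\lambda-\epsilon$, and $\epsilon<\gamma$ once $\epsilon<\tfrac12\ln\lambda$.

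Your fallback of invoking a ready-made reflective SULE lemma does not match what is available, since the notion is introduced in this paper. What can be cited is the scale-$k$ two-center estimate, and this choice of scale is the missing link.
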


\subsection{Idea of the proof}
Now we explain the idea of the proof. Let $\theta_{0}$ be defined in \eqref{halfalpha}, and set
\begin{equation*}
    \theta_{k}\coloneqq\theta_{0}-k\alpha,\qquad H_{k}\coloneqq H_{v,\alpha,\theta_{k}},\qquad H_{0}\coloneqq H_{v,\alpha,\theta_{0}}.
\end{equation*}
The proof consists of three steps. We first establish the required estimate for the dynamics of $H_{0}$ with the initial state $\delta_{-k}$. Because every eigenfunction of $H_{0}$ is reflectively symmetric about $1/2$, the corresponding reflection exchanges the regions $n\leqslant 0$ and $n\geqslant 1$. Thus, every eigenfunction has exactly half of its mass in the region $n\geqslant 1$. By counting localization centers and estimating the eigenfunction tails, we identify a family of eigenfunctions whose mass at $-k$ is sufficiently large. Applying the spectral expansion, we show in \hyperref[half]{Proposition~\ref*{half}} that these eigenfunctions yield the required lower bound for the mass of $e^{-itH_{0}}\delta_{-k}$ in the region $n\geqslant 1$.

We then transfer this estimate to $H_{k}$ with the initial state $\delta_{0}$. Indeed, \hyperref[Qk]{Corollary~\ref*{Qk}} shows that the translation covariance transfers this lower bound to the mass of $e^{-itH_{k}}\delta_{0}$ in the region $n\geqslant k+1$.

Finally, we construct a single phase $\theta$ as the limit of a rapidly convergent subsequence of the phases $\theta_{k}$, see \hyperref[phase]{Lemma~\ref*{phase}}. The convergence rate is chosen to be sufficiently fast so that the dynamical estimates at these phases remain valid at $\theta$ on the corresponding time scales. Since the required time scale grows exponentially with the propagation distance $k$, this directly yields the logarithmic lower bounds for the quantum dynamics.

\section{Estimate of eigenfunctions}
From now on, we assume the hypotheses of \hyperref[criteriontheorem]{Theorem~\ref*{criteriontheorem}}.
For brevity, we write
\begin{equation*}
    H_{0}\coloneqq H_{v,\alpha,\theta_{0}},\qquad \theta_{0}=-\frac{\alpha}{2}.
\end{equation*}
Fix $0<\epsilon<\gamma$ and $C_{\epsilon}>0$ such that \eqref{criteriondecay} holds.

\subsection{Reflective symmetry}

\begin{lemma}\label{eigenbasis}
    Let $\{(\phi_{s},E_{s})\}_{s\in\mathbb{Z}}$ be the eigen-pairs in \hyperref[reflectivelocalization]{Definition~\ref*{reflectivelocalization}}.  Then for each $s\in\mathbb{Z}$, there exists  $\sigma_{s}\in \{\pm 1\}$ such that
    \begin{equation*}
        \phi_{s}(1-n)=\sigma_{s}\phi_{s}(n),\quad \text{for all} \quad n\in\mathbb{Z}.
    \end{equation*}
\end{lemma}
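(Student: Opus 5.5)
The plan is to use the reflection operator $R$ on $\ell^{2}(\mathbb{Z})$, defined by $[R\psi](n)=\psi(1-n)$, together with the simplicity of $\ell^{2}$ eigenvalues of one-dimensional discrete Schr\"odinger operators.

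First, check that $R$ commutes with $H_{0}$. Since $v$ is even and $\theta_{0}=-\alpha/2$,
\begin{equation*}
    V(1-n)=v\bigl(-\tfrac{\alpha}{2}+(1-n)\alpha\bigr)=v\bigl(\tfrac{\alpha}{2}-n\alpha\bigr)=v\bigl(-\tfrac{\alpha}{2}+n\alpha\bigr)=V(n).
\end{equation*}
Then, for any $\psi$,
\begin{equation*}
    [H_{0}R\psi](n)=\psi(-n)+\psi(2-n)+V(n)\psi(1-n),
\end{equation*}
while
\begin{equation*}
    [RH_{0}\psi](n)=\psi(2-n)+\psi(-n)+V(1-n)\psi(1-n).
\end{equation*}
These agree, so $RH_{0}=H_{0}R$. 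Consequently $R\phi_{s}$ is again an $\ell^{2}$ eigenfunction of $H_{0}$ with eigenvalue $E_{s}$, and it is real because $\phi_{s}$ is real.

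Next, use simplicity. Any $\ell^{2}$ solution of $H_{0}u=Eu$ is determined by the pair $(u(0),u(1))$. If two $\ell^{2}$ solutions were linearly independent, their Wronskian would be a nonzero constant. But the Wronskian of two $\ell^{2}$ solutions tends to $0$ at infinity, so it must vanish. Hence the $E_{s}$-eigenspace is one-dimensional, and $R\phi_{s}=\sigma_{s}\phi_{s}$ for some real scalar $\sigma_{s}$.

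Finally, $R^{2}=I$ gives $\sigma_{s}^{2}=1$, so $\sigma_{s}\in\{\pm1\}$. Equivalently, $R$ is unitary, so $|\sigma_{s}|=1$, and $\sigma_{s}$ is real because both $\phi_{s}$ and $R\phi_{s}$ are real. This is exactly the claimed identity $\phi_{s}(1-n)=\sigma_{s}\phi_{s}(n)$.

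There is no real obstacle. The only point needing care is the Wronskian argument for simplicity, which is standard.
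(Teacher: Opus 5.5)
Your proposal is correct and follows essentially the same route as the paper. Both check that the reflection $n\mapsto 1-n$ commutes with $H_{0}$ via evenness of $v$, then use simplicity of the eigenvalues to get $R\phi_{s}=\sigma_{s}\phi_{s}$, and apply the reflection twice to get $\sigma_{s}^{2}=1$. You additionally justify simplicity with the Wronskian argument, which the paper only cites as standard.
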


\begin{proof}

    By our choice of $\theta_{0}$, the operator $H_{0}$ exhibits reflection symmetry with respect to the transformation $n \mapsto 1-n$. More precisely, let $\widetilde{\phi}_{s}(n)=\phi_{s}(1-n)$ for every $s\in\mathbb{Z}$. 
    Since $v$ is even and
    
    \begin{equation*}
        \theta_{0}+(1-n)\alpha=-(\theta_{0}+n\alpha),
    \end{equation*}
    we have
    \begin{equation*}
        v(\theta_{0}+(1-n)\alpha)=v(\theta_{0}+n\alpha).
    \end{equation*}
    Combine this with $H_{0}\phi_{s}=E_{s}\phi_{s}$,
    a direct computation verifies that
    \begin{equation*}
        H_{0}\widetilde{\phi}_{s}=E_{s}\widetilde{\phi}_{s}\quad \text{for every}\quad s\in\mathbb{Z}.
    \end{equation*}
    Because the eigenvalues of one-dimensional discrete Schr\"odinger operators are simple, the eigenspace corresponding to $E_{s}$ is one-dimensional. Therefore, there exists a constant $\sigma_{s}\in\mathbb{R}$ such that $\widetilde{\phi}_{s}=\sigma_{s}\phi_{s}$. Evaluating this symmetry twice for any $n\in\mathbb{Z}$ yields
    \begin{equation*}
        \phi_{s}(1-n)=\sigma_{s}\phi_{s}(n)=\sigma_{s}^{2}\phi_{s}(1-n).
    \end{equation*}
    This implies $\sigma_{s}^{2}=1$, meaning $\sigma_{s}\in \{\pm 1\}$.
\end{proof}

\hyperref[eigenbasis]{Lemma~\ref*{eigenbasis}} implies that $|\phi_{s}|$ has at least two the global maximum points $m_{s}$ and $1-m_{s}$. Without loss the generality, we assume that $m_{s}\leqslant 0$ and call $m_{s}$ the localization center.

\subsection{Counting centers}
We have the following linear bound on the number of eigenfunction whose localization centers lie in a prescribed interval. The idea of the proof comes from \cite[Theorem 7.1]{MR1428099}.

\begin{lemma}\label{count}
    There exist constant $R_{0}>0$ such that
\begin{equation}\label{countingestimate}
    \#\{s:-R\leqslant m_{s}\leqslant 0\}\leqslant 10 R,
\end{equation}
for every $R\geqslant R_{0}$.
\end{lemma}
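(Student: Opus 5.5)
The proof follows the counting argument of \cite[Theorem 7.1]{MR1428099}, using a mass-counting bound against the trace of a projection. Fix $R$ and let $S_{R}=\{s:-R\leqslant m_{s}\leqslant 0\}$. Take a window $\Lambda_{L}=[-R-L,\,1+R+L]\cap\mathbb{Z}$, where $L$ is a buffer length chosen below. This window has $2R+2L+2$ sites and contains both centers $m_{s}$ and $1-m_{s}$ of every $s\in S_{R}$, together with their $L$-neighborhoods. Let $P_{\Lambda}$ be the orthogonal projection onto $\ell^{2}(\Lambda_{L})$. Since $\{\phi_{s}\}$ is orthonormal,
\begin{equation*}
    \sum_{s\in S_{R}}\|P_{\Lambda}\phi_{s}\|^{2}\leqslant \sum_{s\in\mathbb{Z}}\|P_{\Lambda}\phi_{s}\|^{2}=\operatorname{tr}P_{\Lambda}=2R+2L+2 .
\end{equation*}
So it suffices to show $\|P_{\Lambda}\phi_{s}\|^{2}\geqslant\tfrac12$ for each $s\in S_{R}$. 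Then $\#S_{R}\leqslant 4R+4L+4\leqslant 10R$, provided $L\leqslant R/2$ and $R$ is large.

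To bound the escaping mass, I would use \eqref{criteriondecay}. For $n\notin\Lambda_{L}$ we have $d_{s}(n)\geqslant L+(R-|m_{s}|)$, and
\begin{equation*}
    \sum_{n\notin\Lambda_{L}}|\phi_{s}(n)|^{2}\leqslant C\,C_{\epsilon}^{2}e^{2\epsilon|m_{s}|}e^{-2\gamma(L+R-|m_{s}|)}\leqslant C\,C_{\epsilon}^{2}e^{2\epsilon R}e^{-2\gamma L},
\end{equation*}
where I use $|m_{s}|\leqslant R$ and $\epsilon<\gamma$ to maximize the exponent at $|m_{s}|=R$. This is where the main obstacle lies. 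Because $\epsilon$ is fixed rather than arbitrarily small, the factor $e^{2\epsilon R}$ forces $L>\epsilon R/\gamma$. The crude bound then only gives $\#S_{R}\leqslant 4(1+\epsilon/\gamma)R+O(1)<8R+O(1)$, using $\epsilon<\gamma$. Concretely, I would take $L=\lceil(\epsilon/\gamma)R+K\rceil$ with a constant $K$ chosen so that $C\,C_{\epsilon}^{2}e^{-2\gamma K}\leqslant\tfrac12$. This gives $\|P_{\Lambda}\phi_{s}\|^{2}\geqslant\tfrac12$ and
\begin{equation*}
    \#S_{R}\leqslant 2\bigl(2R+2(\epsilon/\gamma)R+2K+4\bigr)\leqslant 8R+O(1)\leqslant 10R
\end{equation*}
for $R\geqslant R_{0}$. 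That is exactly the room the constant $10$ provides.

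If one wants more margin, the window can be shifted. For centers with $|m_{s}|$ close to $R$ the buffer is needed only on the far side, and for small $|m_{s}|$ the factor $e^{\epsilon|m_{s}|}$ is small. Splitting $S_{R}$ dyadically in $|m_{s}|$ would improve the constant, but this is unnecessary here. The remaining steps are routine: summing the geometric tails on the two sides of $\Lambda_{L}$, each side contributing at most $C_{\epsilon}^{2}e^{2\epsilon|m_{s}|}e^{-2\gamma(\mathrm{dist})}/(1-e^{-2\gamma})$, and fixing $R_{0}$ so that the $O(1)$ terms are absorbed.
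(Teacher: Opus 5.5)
Your proposal is correct and is essentially the paper's proof: the paper also bounds $\tfrac12\#\{s:-R\le m_s\le 0\}$ by the number of sites of a window, namely $I_R=[-DR,DR]\cap\mathbb{Z}$ with $1<D<2$ and $\gamma(D-1)>\epsilon$. That window plays the role of your buffer $L\approx(\epsilon/\gamma)R$, and in both arguments $\epsilon<\gamma$ is what keeps the window below $4R$ sites. One small point: your early aside ``provided $L\le R/2$'' is superseded by your later choice of $L$, and the count that actually closes the argument is your final $4(1+\epsilon/\gamma)R+O(1)\le 10R$.
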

\begin{proof}
    Recall that $0<\epsilon<\gamma$. Choose $1<D<2$ such that
    \begin{equation}\label{Dxi}
        \gamma(D-1)>\epsilon.
    \end{equation}
    For $R>0$, we define the interval
    \begin{equation*}
        I_{R}=[-DR,DR]\cap \mathbb{Z}.
    \end{equation*}
    If $-R\leqslant m_{s} \leqslant 0$ and $n\notin I_{R}$, then by \eqref{defds}
    \begin{equation*}
        d_{s}(n)\geqslant (D-1)R-2.
    \end{equation*}
    Thus \eqref{criteriondecay} gives
    \begin{equation*}
        \begin{split}
            \sum_{n\notin I_{R}} |\phi_{s}(n)|^{2}&\leqslant C_{\epsilon}^{2} e^{2\epsilon R} \sum_{n\notin I_{R}}e^{-2\gamma d_{s}(n)}\\
            &\leqslant 4C_{\epsilon}^{2}e^{2\epsilon R} \sum_{\ell\geqslant (D-1)R-2} e^{-2\gamma \ell}\\
            &\leqslant C_{0}e^{-2(\gamma(D-1)-\epsilon)R}.
        \end{split}
    \end{equation*}
    By \eqref{Dxi}, for sufficiently large $R$, we have
    \begin{equation*}
        \sum_{n\notin I_{R}} |\phi_{s}(n)|^{2}\leqslant \frac{1}{2}\quad \text{if}\ -R\leqslant m_{s}\leqslant 0.
    \end{equation*}
    Hence
    \begin{equation*}
        \sum_{n\in I_{R}} |\phi_{s}(n)|^{2}\geqslant \frac{1}{2}\quad \text{if}\ -R\leqslant m_{s}\leqslant 0.
    \end{equation*}
    Using orthonormality and completeness of $\{\phi_{s}\}_{s\in\mathbb{Z}}$, we obtain
\begin{equation*}
    \begin{split}
        \frac{1}{2}\#\{s:-R\leqslant m_{s}\leqslant 0\}&\leqslant\sum_{s:-R\leqslant m_{s}\leqslant 0}\sum_{n\in I_{R}}|\phi_{s}(n)|^{2}\\
        &\leqslant\sum_{n\in I_{R}}\sum_{s\in\mathbb{Z}}|\phi_{s}(n)|^{2}=\# I_{R}.
    \end{split}
\end{equation*}
Since $\# I_{R}\leqslant 5 R$, this proves the result.
\end{proof}

\begin{lemma}\label{tail}
    Let $D'>1$ be such that 
    \begin{equation}\label{D'}
        (\gamma-\epsilon)D'>\gamma.
    \end{equation}
    There exist constants $C_{1},c_{1}>0$ such that
    \begin{equation*}
        \sum_{s: m_{s}\leqslant -D'k}|\phi_{s}(-k)|^{2}\leqslant C_{1}e^{-c_{1}k}
    \end{equation*}
    for all sufficiently large $k\in\mathbb{N}$.
\end{lemma}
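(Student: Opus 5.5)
We plan to estimate each summand with the reflective SULE bound \eqref{criteriondecay} and then add up over dyadic-type shells of localization centers, counting centers in each shell with \hyperref[count]{Lemma~\ref*{count}}.

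\emph{Pointwise bound.} Fix $s$ with $m_{s}\leqslant -D'k$ and write $|m_{s}|=M\geqslant D'k$. Because $m_{s}\leqslant 0$ and $-k\leqslant 0$, the reflected center $1-m_{s}\geqslant 1$ is farther from $-k$ than $m_{s}$ is. Hence
\begin{equation*}
d_{s}(-k)=|-k-m_{s}|=M-k .
\end{equation*}
Then \eqref{criteriondecay} gives
\begin{equation*}
|\phi_{s}(-k)|^{2}\leqslant C_{\epsilon}^{2}e^{2\epsilon M}e^{-2\gamma(M-k)}=C_{\epsilon}^{2}e^{2\gamma k}e^{-2(\gamma-\epsilon)M}.
\end{equation*}
This decays in $M$ because $\epsilon<\gamma$. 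On the range $M\geqslant D'k$, condition \eqref{D'} makes $2\gamma k-2(\gamma-\epsilon)M\leqslant -2\big((\gamma-\epsilon)D'-\gamma\big)k$, so every single term is already exponentially small in $k$.

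\emph{Summation over shells.} Split the index set into the blocks
\begin{equation*}
B_{j}=\{s:\ -(j+1)\leqslant m_{s}\leqslant -j\},\qquad j\geqslant \lfloor D'k\rfloor .
\end{equation*}
Alternatively, use the cumulative sets $\{s:-R\leqslant m_{s}\leqslant 0\}$ and sum by parts. \hyperref[count]{Lemma~\ref*{count}} gives $\#B_{j}\leqslant 10(j+1)$ once $j+1\geqslant R_{0}$, which holds for $k$ large. Therefore
\begin{equation*}
\sum_{s:m_{s}\leqslant -D'k}|\phi_{s}(-k)|^{2}\leqslant C_{\epsilon}^{2}e^{2\gamma k}\sum_{j\geqslant D'k-1}10(j+1)e^{-2(\gamma-\epsilon)j}.
\end{equation*}
The tail of this geometric-times-linear series is at most $C(1+k)e^{-2(\gamma-\epsilon)D'k}$, where the constant absorbs the shift by one. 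The whole sum is thus bounded by
\begin{equation*}
C(1+k)e^{-2((\gamma-\epsilon)D'-\gamma)k}.
\end{equation*}
By \eqref{D'} the exponent is negative. Absorbing the polynomial factor into a slightly smaller rate yields $C_{1}e^{-c_{1}k}$ with, for example, $c_{1}=(\gamma-\epsilon)D'-\gamma$.

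\emph{Main obstacle.} The only delicate point is the bookkeeping at the boundary. The count in \hyperref[count]{Lemma~\ref*{count}} controls all centers in $[-R,0]$, not just those in a shell, so the bound $\#B_{j}\leqslant 10(j+1)$ is crude but sufficient. We must also check that $d_{s}(-k)$ really equals $M-k$ rather than the distance to the reflected center; this holds because both $-k$ and $m_{s}$ lie in $n\leqslant 0$. Everything else is a routine geometric series estimate.
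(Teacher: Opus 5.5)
Your proof is correct and follows the paper's argument. You bound each term by $C_\epsilon^2 e^{2\gamma k}e^{-2(\gamma-\epsilon)|m_s|}$ using \eqref{criteriondecay}, count centers crudely via Lemma~\ref{count}, and control the linear-times-geometric tail with \eqref{D'}; the only cosmetic difference is that the paper sums over exact levels $\{s: m_s=-r\}$ while you use overlapping unit blocks, which costs at most a harmless factor of two.
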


\begin{proof}
    If $m_{s}<-k$, then $d_{s}(-k)=-k-m_{s}>0$ for $k\in \mathbb{N}$. By \eqref{criteriondecay}, 
    \begin{equation}\label{-k}
        |\phi_{s}(-k)|^{2}\leqslant C_{\epsilon}^{2}e^{-2\epsilon m_{s}} e^{2\gamma (k+m_{s})} =C_{\epsilon}^{2}e^{2\gamma k}e^{2(\gamma-\epsilon)m_{s}}. 
    \end{equation}
    By \hyperref[count]{Lemma~\ref*{count}}, for any sufficiently large $r\geqslant R_{0}$,
    \begin{equation}\label{counteq}
        \# \{s: m_{s}=-r\}\leqslant \# \{s: -r\leqslant m_{s}\leqslant 0\}\leqslant 10r.
    \end{equation}
    Therefore, by \eqref{D'}, \eqref{-k} and \eqref{counteq}, for sufficiently large $k$,
    \begin{equation}
        \sum_{s:  m_{s}\leqslant -D'k} |\phi_{s}(-k)|^{2}\leqslant  C_{\epsilon}^{2}e^{2\gamma k}\sum_{r\geqslant D'k} 10re^{-2(\gamma-\epsilon)r}\leqslant C_{1}e^{-c_{1}k}.
    \end{equation}
\end{proof}

\section{Separation of eigenvalues}

Let $\rho>0$ and define
\begin{equation*}
    S_{k,\rho}=\{s:  -D'k\leqslant m_{s} \leqslant 0,\, |\phi_{s}(-k)|^{2}\geqslant e^{-\rho k} \}.
\end{equation*}
We first provide some necessary estimates of $S_{k,\rho}$ for later use.
\begin{lemma}\label{restS}
    For all sufficiently large $k\in\mathbb{Z}$,
    \begin{equation*}
        \# S_{k,\rho} \leqslant 10 D' k,
    \end{equation*}
    and there exist constants $C_{2},c_{2}>0$ such that
    \begin{equation*}
        \sum_{s\notin S_{k,\rho}} |\phi_{s}(-k)|^{2}\leqslant C_{1}e^{-c_{1} k}+10 D'ke^{-\rho k}\leqslant C_{2}e^{-c_{2}k}.
    \end{equation*}
\end{lemma}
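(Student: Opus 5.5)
The cardinality bound follows from the counting of localization centers. By definition, every $s\in S_{k,\rho}$ satisfies $-D'k\leqslant m_{s}\leqslant 0$. Hence
\begin{equation*}
    \# S_{k,\rho}\leqslant \#\{s:-D'k\leqslant m_{s}\leqslant 0\}.
\end{equation*}
We apply \hyperref[count]{Lemma~\ref*{count}} with $R=D'k$, which is admissible once $D'k\geqslant R_{0}$, i.e.\ for all sufficiently large $k$. This gives $\#S_{k,\rho}\leqslant 10D'k$. If one wants to be careful about $R$ not being an integer, note that the lemma is stated for real $R\geqslant R_{0}$, so no rounding is needed.

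For the tail estimate, we split the complement of $S_{k,\rho}$ according to which defining condition fails. Since every $m_{s}\leqslant 0$ by our normalization, an index $s\notin S_{k,\rho}$ satisfies one of the following:
\begin{itemize}
    \item[(i)] $m_{s}<-D'k$;
    \item[(ii)] $-D'k\leqslant m_{s}\leqslant 0$ and $|\phi_{s}(-k)|^{2}<e^{-\rho k}$.
\end{itemize}
The contribution of the indices in (i) is at most $\sum_{s:m_{s}\leqslant -D'k}|\phi_{s}(-k)|^{2}$, which is bounded by $C_{1}e^{-c_{1}k}$ by \hyperref[tail]{Lemma~\ref*{tail}}.

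For the indices in (ii), each term is less than $e^{-\rho k}$. The number of such indices is at most $\#\{s:-D'k\leqslant m_{s}\leqslant 0\}\leqslant 10D'k$, again by \hyperref[count]{Lemma~\ref*{count}}. So their total contribution is at most $10D'ke^{-\rho k}$. Adding the two pieces gives the first inequality.

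The second inequality is elementary. Take $c_{2}=\tfrac12\min\{c_{1},\rho\}$. Then $10D'ke^{-\rho k}\leqslant C e^{-c_{2}k}$ because $ke^{-\rho k/2}$ is bounded. Choosing $C_{2}$ large enough absorbs both terms.

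There is no real obstacle here. The only points to check are that the normalization $m_{s}\leqslant 0$ makes (i) and (ii) exhaust the complement of $S_{k,\rho}$, and that $k$ is large enough for both earlier lemmas to apply.
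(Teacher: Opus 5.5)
Your proof is correct and matches the paper's argument: you get the cardinality bound from Lemma~\ref{count} with $R=D'k$, and you split the complement into indices with $m_s\leqslant -D'k$ (controlled by Lemma~\ref{tail}) and indices with $|\phi_s(-k)|^2<e^{-\rho k}$ (controlled by counting). Your explicit choice $c_2=\tfrac12\min\{c_1,\rho\}$ fills in the elementary final step that the paper leaves implicit.
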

\begin{proof}
    By \hyperref[count]{Lemma~\ref*{count}}, one has
    \begin{equation}\label{appcount}
        \# S_{k,\rho} \leqslant \# \{s:-D' k\leqslant m_{s}\leqslant 0\}\leqslant 10 D' k.
    \end{equation}

    Let
    \begin{equation*}
        \begin{split}
            S_{1}&=\{s:m_{s}\leqslant -D' k\},\\
            S_{2}&=\{s:  -D'k\leqslant m_{s} \leqslant 0,\, |\phi_{s}(-k)|^{2}\leqslant e^{-\rho k} \}.
        \end{split}
    \end{equation*}
Then
    \begin{equation*}
        \sum_{s\notin S_{k,\rho}}|\phi_{s}(-k)|^{2}\leqslant \sum_{s\in S_{1}}|\phi_{s}(-k)|^{2}+\sum_{s\in S_{2}}|\phi_{s}(-k)|^{2}.
    \end{equation*}
    By \hyperref[tail]{Lemma~\ref*{tail}}, we have
    \begin{equation}\label{S1}
        \sum_{s\in S_{1}}|\phi_{s}(-k)|^{2}\leqslant C_{1}e^{-c_{1}k}.
    \end{equation}
    By \eqref{appcount},
    \begin{equation}\label{S2}
        \sum_{s\in S_{2}}|\phi_{s}(-k)|^{2}\leqslant 10 D'k e^{-\rho k}.
    \end{equation}
    Combining \eqref{S1} with \eqref{S2} completes the proof.
\end{proof}

Denote by $\Sigma$ the spectrum of $H_{0}$. Since $\Sigma\subseteq [-2-\|v\|_{C^{0}},2+\|v\|_{C^{0}}]$, we have, for any $E\in\Sigma$ and $\theta\in\mathbb{T}$,
\begin{equation}\label{B}
    \|S_{E}^{v}(\theta)\|\leqslant 10(\|v\|_{C^{0}}+1)\eqqcolon B,
\end{equation}
where
\begin{equation*}
    S_{E}^{v}(\theta)\coloneqq \begin{pmatrix}
        E-v(\theta)&-1\\
        1&0
    \end{pmatrix}.
\end{equation*}

\begin{lemma}\label{gap}
    Let $s,\ell\in S_{k,\rho}$ and assume $\sigma_{s}\neq \sigma_{\ell}$. Then
    \begin{equation*}
        |E_{s}-E_{\ell}|\geqslant 2B^{-2k} e^{-\rho k}.
    \end{equation*}
\end{lemma}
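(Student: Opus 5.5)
Since $\sigma_{s}\neq\sigma_{\ell}$, one of the two eigenfunctions is even about $1/2$ and the other is odd. Say $\sigma_{s}=1$ and $\sigma_{\ell}=-1$, so that $\phi_{s}(1)=\phi_{s}(0)$ and $\phi_{\ell}(1)=-\phi_{\ell}(0)$. I will compare the two eigenfunctions through a Wronskian at the symmetry point and then transport that Wronskian to the site $-k$ with transfer matrices. Set
\begin{equation*}
W(n)\coloneqq\phi_{s}(n+1)\phi_{\ell}(n)-\phi_{s}(n)\phi_{\ell}(n+1).
\end{equation*}
The standard identity for $H_{0}\phi=E\phi$ gives
\begin{equation*}
W(n)-W(n-1)=(E_{\ell}-E_{s})\phi_{s}(n)\phi_{\ell}(n).
\end{equation*}
At the center,
\begin{equation*}
W(0)=\phi_{s}(1)\phi_{\ell}(0)-\phi_{s}(0)\phi_{\ell}(1)=2\phi_{s}(0)\phi_{\ell}(0).
\end{equation*}
Summing the difference identity from $n=-k+1$ to $0$, and using Cauchy--Schwarz together with $\|\phi_{s}\|=\|\phi_{\ell}\|=1$, gives
\begin{equation*}
|W(0)-W(-k)|\leqslant|E_{s}-E_{\ell}|.
\end{equation*}
So it suffices to show that $|W(0)-W(-k)|\geqslant 2B^{-2k}e^{-\rho k}$.

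To get this I plan a direct transfer-matrix argument rather than relying on $W(0)$ alone. Write $\Phi_{s}(n)=(\phi_{s}(n+1),\phi_{s}(n))^{T}$, so that $\Phi_{s}(0)=\phi_{s}(0)(1,1)^{T}$ and $\Phi_{\ell}(0)=\phi_{\ell}(0)(-1,1)^{T}$. The transfer matrix $M_{E}$ from $0$ back to $-k$ has determinant one and $\|M_{E}^{\pm1}\|\leqslant B^{k}$ by \eqref{B}. A naive choice of transfer matrices would give $\Phi_{s}(-k)=M_{E_{s}}\Phi_{s}(0)$ and $\Phi_{\ell}(-k)=M_{E_{\ell}}\Phi_{\ell}(0)$, which involve two different energies. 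To avoid this I will use the Wronskian at $-k$ directly. The quantity $W(-k)$ is the determinant of the matrix with columns $\Phi_{s}(-k)$ and $\Phi_{\ell}(-k)$, and lower bounds of the form $|\phi(-k)|\leqslant\|\Phi(-k)\|\leqslant B^{k}\|\Phi(0)\|$ show that $\phi_{s}(0)$ and $\phi_{\ell}(0)$ are at least $B^{-k}e^{-\rho k/2}/\sqrt2$. Hence $|W(0)|\geqslant B^{-2k}e^{-\rho k}$.

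\textbf{Main obstacle.} It remains to control $W(-k)$. I expect this to be the delicate step, and I see two possible routes.

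\emph{First route: vanishing at $-k$.} If $W(-k)$ were negligible, the bound would follow immediately. It is not obviously small, so the cleaner version is the following. Suppose $|E_{s}-E_{\ell}|<2B^{-2k}e^{-\rho k}$. Then also $|W(0)-W(-k)|<2B^{-2k}e^{-\rho k}$. Moreover, by the same difference identity, $W(n)$ stays nearly constant along the whole half-line $n\leqslant0$ up to an error $|E_{s}-E_{\ell}|$, because the full sum $\sum_{n}|\phi_{s}\phi_{\ell}|$ is at most $1$. Since $W(n)\to0$ as $n\to-\infty$ (both eigenfunctions are $\ell^{2}$), this forces
\begin{equation*}
|W(0)|\leqslant|E_{s}-E_{\ell}|.
\end{equation*}
So the argument reduces to a lower bound on $|W(0)|$.

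\emph{Second route: a better lower bound on $|W(0)|$.} I need $|W(0)|\geqslant 2B^{-2k}e^{-\rho k}$. The defining condition $|\phi_{s}(-k)|^{2}\geqslant e^{-\rho k}$ gives, via $k$ transfer steps, $\|\Phi_{s}(0)\|\geqslant B^{-k}e^{-\rho k/2}$. Because $\Phi_{s}(0)$ is parallel to $(1,1)^{T}$, this yields $|\phi_{s}(0)|\geqslant B^{-k}e^{-\rho k/2}/\sqrt2$, and likewise for $\phi_{\ell}$. This only gives $|W(0)|\geqslant B^{-2k}e^{-\rho k}$, which is off by the factor $2$. Recovering that factor requires a sharper transfer bound. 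I would try the more careful bound $\|\Phi(-k)\|\leqslant (B/\sqrt2)^{k}\|\Phi(0)\|$, or else absorb the constant into the choice of $B$, for example using one extra step so that $B^{-1}\cdot 10\leqslant 1$. This constant bookkeeping is where I expect the fiddling to be.
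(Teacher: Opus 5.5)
Your architecture matches the paper's. You use the Wronskian at the symmetry point, $W(0)=\pm2\phi_s(0)\phi_\ell(0)$. You telescope $W(n)-W(n-1)=(E_s-E_\ell)\phi_s(n)\phi_\ell(n)$ over a half-line on which $W\to0$; your sign is flipped, which is harmless. And you use a $k$-step transfer-matrix bound giving $|\phi_s(0)|,|\phi_\ell(0)|\geqslant\frac{1}{\sqrt2}B^{-k}e^{-\rho k/2}$.

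\textbf{The gap.} As written, the proposal only proves $|E_s-E_\ell|\geqslant B^{-2k}e^{-\rho k}$, and the factor $2$ is left to unspecified ``fiddling''. The factor is not lost in the transfer-matrix step. It is lost in your first route, where you bound $\sum_{n\leqslant0}|\phi_s(n)\phi_\ell(n)|$ by $1$ using the full-line normalization.

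\textbf{The missing idea.} Lemma~\ref{eigenbasis} gives $|\phi(1-n)|=|\phi(n)|$, and $n\mapsto1-n$ swaps $\{n\leqslant0\}$ and $\{n\geqslant1\}$. So every eigenfunction carries exactly mass $\frac12$ on each half-line. Cauchy--Schwarz over the half-line then gives $\sum_{n\leqslant0}|\phi_s(n)\phi_\ell(n)|\leqslant\frac12$, hence $|W(0)|\leqslant\frac12|E_s-E_\ell|$. This yields $|E_s-E_\ell|\geqslant2|W(0)|=4|\phi_s(0)\phi_\ell(0)|\geqslant2B^{-2k}e^{-\rho k}$. This is exactly how the paper closes the argument; it sums over $n\geqslant1$ instead, which is equivalent. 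The detour through $W(-k)$ is unnecessary.

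\textbf{Your two proposed fixes.} Absorbing the constant into the choice of $B$ is not admissible. $B$ is fixed by \eqref{B} and reused later in $T_k=B^{2k}e^{2\rho k}$. The sharper bound $\|\Phi(-k)\|\leqslant(B/\sqrt2)^k\|\Phi(0)\|$ is actually true, but you give no justification, and it holds only because $B$ was chosen generously. Put $a=E-v(\theta)$. For $E\in\Sigma$ one has
\[
\|S^v_E(\theta)\|=\tfrac{|a|+\sqrt{a^2+4}}{2}\leqslant|a|+1\leqslant3+2\|v\|_{C^0}<B/\sqrt2 .
\]
Since the transfer matrices have determinant one, the inverse product has the same norm as the product. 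With this written out you would get $|W(0)|\geqslant2^kB^{-2k}e^{-\rho k}$, which suffices for $k\geqslant1$. That route works, but it rests on slack in an arbitrary constant. The half-mass identity is the structural reason the factor $2$ appears, and it comes from the same reflection symmetry you already used to compute $W(0)$.
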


\begin{proof}
    Recall that the Wronskian of $\phi_{s}$ and $\phi_{\ell}$ is given by
    \begin{equation*}
        W(n)=\phi_{s}(n+1)\phi_{\ell}(n)-\phi_{s}(n)\phi_{\ell}(n+1).
    \end{equation*}
    By $H_{0}\phi_{s}=E_{s}\phi_{s}$ and $H_{0}\phi_{\ell}=E_{\ell}\phi_{\ell}$, a direct calculation gives
\begin{equation}\label{wronskiandifference}
    W(n)-W(n-1)=(E_{s}-E_{\ell})\phi_{s}(n)\phi_{\ell}(n).
\end{equation}
Since $\phi_{s},\phi_{\ell}\in\ell^{2}(\mathbb{Z})$, we have $W(n)\to 0$ as $n\to\infty$. Summing \eqref{wronskiandifference} over $n\geqslant1$, we obtain
\begin{equation}\label{wronskiansum}
    -W(0)=(E_{s}-E_{\ell})\sum_{n\geqslant1}\phi_{s}(n)\phi_{\ell}(n).
\end{equation}
Because $\sigma_{s},\sigma_{\ell}\in \{\pm 1\}$ and $\sigma_{s}\neq\sigma_{\ell}$, \hyperref[eigenbasis]{Lemma~\ref*{eigenbasis}} implies
\begin{equation}\label{W0identity}
    |W(0)|=|\phi_{s}(1)\phi_{\ell}(0)-\phi_{s}(0)\phi_{\ell}(1)|=2|\phi_{s}(0)\phi_{\ell}(0)|.
\end{equation}
Moreover, \hyperref[eigenbasis]{Lemma~\ref*{eigenbasis}}  again gives that
\begin{equation*}
    \sum_{n\geqslant1}|\phi_{s}(n)|^{2}=\sum_{n\geqslant1}|\phi_{\ell}(n)|^{2}=\frac{1}{2}.
\end{equation*}
Thus by H\"older inequality,
\begin{equation}\label{halfcauchy}
    \bigg|\sum_{n\geqslant1}\phi_{s}(n)\phi_{\ell}(n)\bigg|\leqslant \bigg(\sum_{n\geqslant1}|\phi_{s}(n)|^{2}\bigg)^{\frac{1}{2}} \bigg(\sum_{n\geqslant1}|\phi_{\ell}(n)|^{2}\bigg)^{\frac{1}{2}} \leqslant\frac{1}{2}.
\end{equation}
Combine \eqref{wronskiansum},\eqref{W0identity}, with \eqref{halfcauchy}, we get
\begin{equation}\label{Est}
    |E_{s}-E_{\ell}|\geqslant 4|\phi_{s}(0)\phi_{\ell}(0)|.
\end{equation}

In the following, we estimate $|\phi_{s}(0)|$, the estimate of $|\phi_{\ell}(0)|$ is similar. Set
\begin{equation*}
    \Phi_{s}(n)=\begin{pmatrix}
        \phi_{s}(n+1)\\
        \phi_{s}(n)
    \end{pmatrix}.
\end{equation*}
Since $s\in S_{k,\rho}$, we have
\begin{equation*}
    \|\Phi_{s}(-k)\|\geqslant |\phi_{s}(-k)|\geqslant e^{-\rho k/2}.
\end{equation*}
By \eqref{B} and  the cocycle iteration
\begin{equation*}
    \begin{pmatrix}
        \phi_{s}(1)\\ \phi_{s}(0)
    \end{pmatrix}
    =S_{E_{s}}^{v}(\theta_{0})\cdots S_{E_{s}}^{v}(\theta_{0}-(k-1)\alpha) \begin{pmatrix}
        \phi_{s}(-k+1)\\ \phi_{s}(-k)
    \end{pmatrix},
\end{equation*}
we have
\begin{equation*}
    \|\Phi_{s}(0)\| \geqslant B^{-k} \|\Phi_{s}(-k)\| \geqslant B^{-k}e^{-\rho k/2}.
\end{equation*}
\hyperref[eigenbasis]{Lemma~\ref*{eigenbasis}} implies that $\phi_{s}(1)=\sigma_{s}\phi_{s}(0)$, and hence
\begin{equation*}
    \|\Phi_{s}(0)\|=\sqrt{2} |\phi_{s}(0)|.
\end{equation*}
Thus for any $s,\ell\in S_{k,\rho}$, we have
\begin{equation}\label{phi0}
    |\phi_{s}(0)| \geqslant \frac{1}{\sqrt{2}}B^{-k}e^{-\rho k/2},\quad |\phi_{\ell}(0)| \geqslant \frac{1}{\sqrt{2}}B^{-k}e^{-\rho k/2}.
\end{equation}
Substituting \eqref{phi0} into \eqref{Est} proves the result.
\end{proof}

\section{Lower bound for half-line operators}
Let $Q$ denote the orthogonal projection onto $\ell^{2}([1,\infty)\cap\mathbb{Z})$, that is,
\begin{equation}\label{Q}
    [Q\psi](n)=\begin{cases}\psi(n),&n\geqslant1,\\0,&n\leqslant0.\end{cases}
\end{equation}

\begin{proposition}\label{half}
    Let $\rho>0$ and $k\in\mathbb{N}$. Let 
    \begin{equation*}
        T_{k}=B^{2k} e^{2\rho k}.
    \end{equation*}
    Then, for all sufficiently large $k$,
\begin{equation*}
    \frac{2}{T_{k}}\int_{0}^{\infty}e^{-2t/T_{k}} \|Qe^{-itH_{0}}\delta_{-k}\|^{2}\,\mathrm{d}t\geqslant\frac{1}{4}.
\end{equation*}
\end{proposition}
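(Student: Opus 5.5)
We expand the evolution in the eigenbasis and use the Abel average to kill the oscillating cross terms between eigenvalues that are far apart relative to $T_{k}^{-1}$. Writing $a_{s}=\phi_{s}(-k)$ (real), we have $e^{-itH_{0}}\delta_{-k}=\sum_{s}a_{s}e^{-itE_{s}}\phi_{s}$. Hence
\begin{equation*}
\|Qe^{-itH_{0}}\delta_{-k}\|^{2}=\sum_{s,\ell}a_{s}a_{\ell}e^{-it(E_{s}-E_{\ell})}\langle Q\phi_{s},Q\phi_{\ell}\rangle .
\end{equation*}
Averaging against $\frac{2}{T}e^{-2t/T}$ gives the weight
\begin{equation*}
\frac{2}{T}\int_{0}^{\infty}e^{-2t/T}e^{-it\Delta}\,\mathrm{d}t=\frac{1}{1+iT\Delta/2},
\end{equation*}
whose modulus is at most $2/(T|\Delta|)$. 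For the diagonal terms, Lemma~\ref{eigenbasis} gives $\|Q\phi_{s}\|^{2}=\frac12$. These terms therefore contribute $\frac12\sum_{s}a_{s}^{2}=\frac12$, since $\{\phi_{s}\}$ is a complete orthonormal basis and $\|\delta_{-k}\|=1$. The goal is to show that the off-diagonal part is at least $-\frac14$ for large $k$.

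We split the coefficients as $a=a^{S}+a^{\perp}$ according to $s\in S_{k,\rho}$ or not, and write $\psi^{S}(t)=\sum_{s\in S_{k,\rho}}a_{s}e^{-itE_{s}}\phi_{s}$ and similarly $\psi^{\perp}(t)$. Lemma~\ref{restS} gives $\|\psi^{\perp}(t)\|^{2}\leqslant C_{2}e^{-c_{2}k}$ uniformly in $t$. Then
\begin{equation*}
\|Q\psi\|^{2}\geqslant\|Q\psi^{S}\|^{2}-2\|\psi^{S}\|\|\psi^{\perp}\|\geqslant\|Q\psi^{S}\|^{2}-2\sqrt{C_{2}}e^{-c_{2}k/2},
\end{equation*}
so it suffices to treat $\psi^{S}$.

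For $s,\ell\in S_{k,\rho}$ with $s\neq\ell$ we have two cases. If $\sigma_{s}=\sigma_{\ell}$, then $\langle Q\phi_{s},Q\phi_{\ell}\rangle=\frac12\langle\phi_{s},\phi_{\ell}\rangle=0$, because the reflection $n\mapsto1-n$ maps $\{n\leqslant0\}$ onto $\{n\geqslant1\}$ and multiplies the product $\phi_{s}\phi_{\ell}$ by $\sigma_{s}\sigma_{\ell}=1$. So only pairs with $\sigma_{s}\neq\sigma_{\ell}$ survive. For these, Lemma~\ref{gap} gives $|E_{s}-E_{\ell}|\geqslant2B^{-2k}e^{-\rho k}$. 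With $T_{k}=B^{2k}e^{2\rho k}$, the averaged weight is then bounded by $2/(T_{k}|E_{s}-E_{\ell}|)\leqslant e^{-\rho k}$. Using $|\langle Q\phi_{s},Q\phi_{\ell}\rangle|\leqslant\frac12$ and $|a_{s}|\leqslant1$, the total off-diagonal contribution is at most $\frac12 e^{-\rho k}(\#S_{k,\rho})^{2}$. By Lemma~\ref{restS} this is at most $50D'^{2}k^{2}e^{-\rho k}$, which tends to $0$.

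Combining these estimates, the Abel average of $\|Q\psi^{S}\|^{2}$ is at least
\begin{equation*}
\frac12\sum_{s\in S_{k,\rho}}a_{s}^{2}-o(1)\geqslant\frac12\bigl(1-C_{2}e^{-c_{2}k}\bigr)-o(1).
\end{equation*}
Subtracting the $O(e^{-c_{2}k/2})$ cross-term error, the full average is at least $\frac14$ for all sufficiently large $k$. The main point is the off-diagonal control. The factor $e^{2\rho k}$ in $T_{k}$, rather than $e^{\rho k}$, is exactly what absorbs the polynomial cardinality $(\#S_{k,\rho})^{2}$ after using the gap from Lemma~\ref{gap}. The orthogonality of equal-parity pairs is what allows us to apply Lemma~\ref{gap} only where it holds.
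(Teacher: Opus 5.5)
Your proof is correct and follows the paper's argument essentially step for step. You use the same splitting of $\delta_{-k}$ along $S_{k,\rho}$, the vanishing of equal-parity cross terms by the reflection, the separation estimate of Lemma~\ref{gap} for opposite-parity pairs, and an $O(e^{-c_{2}k/2})$ error from the discarded part. The only cosmetic difference is that you bound the off-diagonal sum by $\tfrac12(\#S_{k,\rho})^{2}e^{-\rho k}$, whereas the paper uses Cauchy--Schwarz to get $\#S_{k,\rho}\,e^{-\rho k}$; either bound suffices since $e^{-\rho k}$ beats any polynomial in $k$.
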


\begin{proof}
For every $k\in\mathbb{N}$,  let $\mathbb{P}_{k}$ denote the spectral projection onto $\operatorname{span}\{\phi_{s}:s\in S_{k,\rho}\}$, and decompose
\begin{equation}\label{fg}
    \delta_{-k}= \mathbb{P}_{k}\delta_{-k} + (\mathrm{I}-\mathbb{P}_{k})\delta_{-k}\eqqcolon f_{k}+g_{k}.
\end{equation}

We first estimate the quantum dynamics with initial state $f_k$. Denote
\begin{equation*}
    X_{f_{k}}(T)=\frac{2}{T}\int_{0}^{\infty}e^{-2t/T} \|Qe^{-itH_{0}} f_{k}\|^{2}\,\mathrm{d}t.
\end{equation*}
Since $e^{-itH_{0}}\phi_{s}=e^{-itE_{s}}\phi_{s}$ and $f_{k}=\mathbb{P}_{k}\delta_{-k}$, one has
\begin{equation*}
    Qe^{-itH_{0}} f_{k}= \sum_{s\in S_{k,\rho}}e^{-it E_{s}} \phi_{s}(-k) Q\phi_{s},
\end{equation*}
We have
\begin{equation*}
    \begin{split}
        \|Qe^{-itH_{0}} f_{k}\|^{2}&= \sum_{s,\ell\in S_{k,\rho}}e^{-it (E_{s}-E_{\ell})} \phi_{s}(-k)\overline{\phi_{\ell}(-k)} \langle Q\phi_{s},Q\phi_{\ell}\rangle\\
        &=\sum_{s,\ell\in S_{k,\rho}}e^{-it (E_{s}-E_{\ell})} \phi_{s}(-k)\overline{\phi_{\ell}(-k)} \langle \phi_{\ell},Q\phi_{s}\rangle.
    \end{split}
\end{equation*}
Hence
\begin{equation}\label{Xf}
    \begin{split}
        X_{f_{k}}(T)
        &=\sum_{s,\ell\in S_{k,\rho}}\phi_{s}(-k)\overline{\phi_{\ell}(-k)} \langle \phi_{\ell},Q\phi_{s}\rangle \frac{2}{T}\int_{0}^{\infty}e^{-2t/T}e^{-it (E_{s}-E_{\ell})}  \,\mathrm{d}t\\
        &=\sum_{s,\ell\in S_{k,\rho}}\phi_{s}(-k)\overline{\phi_{\ell}(-k)} \langle \phi_{\ell},Q\phi_{s}\rangle \frac{2/T}{2/T+i(E_{s}-E_{\ell})}\\
        &=\sum_{s\in S_{k,\rho}} |\phi_{s}(-k)|^{2} \langle \phi_{s},Q\phi_{s}\rangle\\
        &\quad+ \sum_{s,\ell\in S_{k,\rho},s\neq \ell}\phi_{s}(-k)\overline{\phi_{\ell}(-k)} \langle \phi_{\ell},Q\phi_{s}\rangle \frac{2/T}{2/T+i(E_{s}-E_{\ell})}.
    \end{split}
\end{equation}
For the diagonal terms $s=\ell$, \hyperref[eigenbasis]{Lemma~\ref*{eigenbasis}}  shows that $\langle \phi_{s},Q\phi_{s}\rangle=\frac{1}{2}$, then
\begin{equation*}
    \begin{split}
        \sum_{s\in S_{k,\rho}} |\phi_{s}(-k)|^{2} \langle \phi_{s},Q\phi_{s}\rangle &= \frac{1}{2}\sum_{s\in S_{k,\rho}} |\phi_{s}(-k)|^{2}\\
        &=\frac{1}{2}\bigg(1-\sum_{s\notin S_{k,\rho}}|\phi_{s}(-k)|^{2}\bigg).
    \end{split}
\end{equation*}
It follows from \hyperref[restS]{Lemma~\ref*{restS}} that for sufficiently large $k$,
\begin{equation}\label{s=t}
    \sum_{s\in S_{k,\rho}} |\phi_{s}(-k)|^{2} \langle \phi_{s},Q\phi_{s}\rangle\geqslant \frac{1}{2}-C_{2}e^{-c_{2} k}.
\end{equation}

For the off-diagonal terms where $s\neq \ell$, if $\sigma_{s}=\sigma_{\ell}$, then $\langle \phi_{\ell},Q\phi_{s}\rangle=0$ due to the reflective symmetry and orthogonality. If $\sigma_{s}\neq\sigma_{\ell}$, the eigenvalue separation established in \hyperref[gap]{Lemma~\ref*{gap}} implies that
\begin{equation}\label{gapst}
    \bigg|\frac{2/T}{2/T+i(E_{s}-E_{\ell})}\bigg|\leqslant \frac{2}{T|E_{s}-E_{\ell}|} \leqslant \frac{B^{2k}e^{\rho k}}{T}.
\end{equation}
Moreover, by $|\langle \phi_{\ell},Q\phi_{s}\rangle|\leqslant 1$,
\begin{equation*}
    \begin{split}
        \sum_{\substack{s,\ell\in S_{k,\rho}\\s\neq \ell}}|\phi_{s}(-k)\overline{\phi_{\ell}(-k)} \langle \phi_{\ell},Q\phi_{s}\rangle|&\leqslant \sum_{s,\ell\in S_{k,\rho}}|\phi_{s}(-k)\overline{\phi_{\ell}(-k)}| \\
        &=\bigg(\sum_{s\in S_{k,\rho}}|\phi_{s}(-k)|\bigg)^{2}.
    \end{split}
\end{equation*}
By H\"older inequality, 
\begin{equation*}
    \bigg(\sum_{s\in S_{k,\rho}}|\phi_{s}(-k)|\bigg)^{2}\leqslant \# S_{k,\rho} \sum_{s\in S_{k,\rho}}|\phi_{s}(-k)|^{2}.
\end{equation*}
Thus, by \hyperref[restS]{Lemma~\ref*{restS}} and the completeness of the eigenbasis, we obtain
\begin{equation}\label{stk}
    \sum_{\substack{s,\ell\in S_{k,\rho}\\s\neq \ell}}|\phi_{s}(-k)\overline{\phi_{\ell}(-k)} \langle \phi_{\ell},Q\phi_{s}\rangle|\leqslant \# S_{k,\rho} 
        \leqslant 10D'k.
\end{equation}
By \eqref{Xf}, \eqref{s=t}, \eqref{gapst} and \eqref{stk}, we have for sufficiently large $k$,
\begin{equation*}
    X_{f_{k}}(T)\geqslant \frac{1}{2}-C_{2}e^{-c_{2} k}-10D' \frac{kB^{2k}e^{\rho k}}{T}.
\end{equation*}
Since $T_{k}=B^{2k}e^{2\rho k}$, we have for sufficiently large $k$,
\begin{equation}\label{XfT}
    X_{f_{k}}(T_{k})\geqslant \frac{1}{2}-C_{2}e^{-c_{2} k}-10D'ke^{-\rho k}\geqslant \frac{1}{3}.
\end{equation}

Now we estimate $\|Qe^{-itH_{0}}\delta_{-k}\|$. By $|\|a+b\|^{2}-\|a\|^{2}|\leqslant 2\|a\|\, \|b\|+\|b\|^{2}$ and \eqref{fg}, we have
\begin{equation*}
    |\|Qe^{-itH_{0}}\delta_{-k}\|^{2}-\|Qe^{-itH_{0}}f_{k}\|^{2} |\leqslant 2\|g_{k}\|+\|g_{k}\|^{2}.
\end{equation*}
Since $g_{k}=(\mathrm{I}-\mathbb{P}_{k})\delta_{-k}$, by \hyperref[restS]{Lemma~\ref*{restS}}, we have for sufficiently large $k$,
\begin{equation*}
    \|g_{k}\|^{2}=\sum_{s\notin S_{k,\rho}}|\phi_{s}(-k)|^{2}\leqslant C_{2}e^{-c_{2} k}.
\end{equation*}
Thus for sufficiently large $k$,
\begin{equation}\label{error}
    \|Qe^{-itH_{0}}\delta_{-k}\|^{2}\geqslant \|Qe^{-itH_{0}}f_{k}\|^{2}-3\sqrt{C_{2}}e^{-c_{2} k/2}.
\end{equation}
By \eqref{XfT} and \eqref{error}, for sufficiently large $k$,
\begin{equation*}
    \frac{2}{T_{k}}\int_{0}^{\infty}e^{-2t/T_{k}} \|Qe^{-itH_{0}}\delta_{-k}\|^{2}\,\mathrm{d}t \geqslant X_{f_{k}}(T_{k})-3\sqrt{C_{2}}e^{-c_{2} k/2} \geqslant \frac{1}{4}.
\end{equation*}
\end{proof}

Since we care about the quantum dynamics with initial state $\delta_{0}$, we have the following corollary for the shifted Schr\"odinger 
operator $H_{k}\coloneqq H_{v,\alpha,\theta_{k}}$
where $\theta_{k}\coloneqq \theta_{0}-k\alpha$. 

\begin{corollary}\label{Qk}
    Let $T_{k}=B^{2k} e^{2\rho k}$. For all sufficiently large $k\in\mathbb{N}$,
    \begin{equation*}
        \frac{2}{T_{k}}\int_{0}^{\infty}e^{-2t/T_{k}} \sum_{n\geqslant k+1}|\langle \delta_{n},e^{-itH_{k}}\delta_{0}\rangle |^{2}\,\mathrm{d}t\geqslant\frac{1}{4}.
    \end{equation*}
\end{corollary}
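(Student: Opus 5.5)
The plan is to deduce the corollary from \hyperref[half]{Proposition~\ref*{half}} by translation covariance. Let $U$ denote the shift $[U\psi](n)=\psi(n+k)$, which is unitary on $\ell^{2}(\mathbb{Z})$. The first step is to verify the intertwining relation $U^{*}H_{0}U=H_{k}$, or equivalently $H_{0}U=UH_{k}$; the exact direction depends on the convention for $U$, and I will fix it by this computation.

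For any $\psi$,
\begin{equation*}
    [H_{0}U\psi](n)=\psi(n+k+1)+\psi(n+k-1)+v(\theta_{0}+n\alpha)\psi(n+k).
\end{equation*}
On the other hand,
\begin{equation*}
    [UH_{k}\psi](n)=[H_{k}\psi](n+k)=\psi(n+k+1)+\psi(n+k-1)+v(\theta_{k}+(n+k)\alpha)\psi(n+k).
\end{equation*}
Since $\theta_{k}+(n+k)\alpha=\theta_{0}+n\alpha$, the two expressions agree. Hence $H_{0}U=UH_{k}$, and by functional calculus $e^{-itH_{0}}U=Ue^{-itH_{k}}$.

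The second step is to identify the initial states. We have $[U\delta_{0}](n)=\delta_{0}(n+k)$, so $U\delta_{0}=\delta_{-k}$. Therefore
\begin{equation*}
    e^{-itH_{0}}\delta_{-k}=Ue^{-itH_{k}}\delta_{0},
\end{equation*}
which gives, for every $m$,
\begin{equation*}
    \langle\delta_{m},e^{-itH_{0}}\delta_{-k}\rangle=[e^{-itH_{k}}\delta_{0}](m+k).
\end{equation*}
Summing over $m\geqslant1$ and substituting $n=m+k$ yields
\begin{equation*}
    \|Qe^{-itH_{0}}\delta_{-k}\|^{2}=\sum_{n\geqslant k+1}|\langle\delta_{n},e^{-itH_{k}}\delta_{0}\rangle|^{2}.
\end{equation*}
Inserting this identity into \hyperref[half]{Proposition~\ref*{half}}, with the same $T_{k}$, gives the claimed bound for all sufficiently large $k$.

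No step here is a genuine obstacle. The only point needing care is the sign bookkeeping: the relation $\theta_{k}=\theta_{0}-k\alpha$ must be matched with the shift direction so that $\delta_{-k}$ corresponds to $\delta_{0}$ and the region $n\geqslant1$ corresponds to $n\geqslant k+1$, rather than to $n\geqslant 1-k$.
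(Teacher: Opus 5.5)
Your proposal is correct and follows essentially the same route as the paper: both conjugate $H_{0}$ to $H_{k}$ by a unitary translation, identify $\delta_{-k}$ with the image of $\delta_{0}$, and map the region $n\geqslant1$ to $n\geqslant k+1$ before invoking Proposition~\ref{half}. Your shift $[U\psi](n)=\psi(n+k)$ is simply the inverse of the paper's $U_{k}$, and your sign bookkeeping checks out.
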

\begin{proof}
    Define the translation operator $U_{k}$ on $\ell^{2}(\mathbb{Z})$ by
    \begin{equation*}
        [U_{k}\psi](n)=\psi(n-k).
    \end{equation*}
    It is obvious that $U_{k}$ is unitary.
    A direct calculation gives $U_{k}H_{0}U_{k}^{-1}=H_{k}$ and thus $e^{-it H_{k}}=U_{k}e^{-it H_{0}}U_{k}^{-1}$.
    
    Let $Q_{k}$ be the orthogonal projection onto $\ell^{2}([k+1,\infty)\cap \mathbb{Z})$. By \eqref{Q}, we have $Q_{k}=U_{k}QU_{k}^{-1}$. Thus
    \begin{equation*}
        \begin{split}
            \sum_{n\geqslant k+1}|\langle \delta_{n},e^{-itH_{k}}\delta_{0}\rangle |^{2} &=\|Q_{k}e^{-itH_{k}} \delta_{0}\|^{2}\\
            &=\|U_{k}QU_{k}^{-1} U_{k}e^{-itH_{0}}U_{k}^{-1}\delta_{0}\|^{2}\\
            &=\|U_{k}Qe^{-itH_{0}}\delta_{-k}\|^{2}\\
            &=\|Qe^{-itH_{0}}\delta_{-k}\|^{2}.
        \end{split}
    \end{equation*}
    Therefore, the result follows from \hyperref[half]{Proposition~\ref*{half}}.
\end{proof}

\section{Proof of the main theorems}

\subsection{Construction of the phase}

Since $v$ is continuous and $\mathbb{T}$ is compact, we have $v$ is uniformly continuous on $\mathbb{T}$. Define the modulus of the continuity of $v$ by
\begin{equation}\label{Lip}
    \omega_{v}(\delta)=\sup\{|v(x)-v(y)|:\|x-y\|_{\mathbb{T}}\leqslant \delta\},\qquad \delta\geqslant0.
\end{equation}

Recall that for every $k\in\mathbb{N}$,
\begin{equation*}
    T_{k}=B^{2k} e^{2\rho k},\qquad\theta_{k}=\theta_{0}-k\alpha.
\end{equation*}

\begin{lemma}\label{phase}
    Let $\rho>0$. There exist $\theta\in\mathbb{T}$ and sequences $\varepsilon_{j}\to 0$ and $k_{j}\to\infty$ such that
    \begin{equation}\label{j1}
        \|\theta-\theta_{k_{j}}\|_{\mathbb{T}}\leqslant \varepsilon_{j},\qquad j\geqslant 1,
    \end{equation}
    \begin{equation}\label{j2}
        \omega_{v}(\varepsilon_{j}) \leqslant T_{k_{j}}^{-1} e^{-(j+4)k_{j}},\qquad j\geqslant 1
    \end{equation}
    \begin{equation}\label{j3}
        \varepsilon_{j}\leqslant e^{-j(2k_{j}+1)}\qquad j\geqslant 1.
    \end{equation}
\end{lemma}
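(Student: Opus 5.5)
The phase $\theta$ will be the limit of a Cauchy subsequence of $\theta_k=\theta_0-k\alpha$, built inductively so that each new term lies in a tiny neighbourhood of all earlier ones. The tool is density of $\{k\alpha\}$ in $\mathbb{T}$ ($\alpha$ irrational). Together with the recurrence of rotations, it gives: for any $\theta_{k}$ and any $\eta>0$ there are infinitely many $k'>k$ with $\|\theta_{k'}-\theta_k\|_{\mathbb{T}}<\eta$, namely those with $\|(k'-k)\alpha\|_{\mathbb{T}}<\eta$.

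First I fix the tolerances. Since $v$ is uniformly continuous, $\omega_v(\delta)\to0$ as $\delta\to0$. Given $j$ and $k_j$, I choose $\eta_j>0$ satisfying
\begin{equation*}
\omega_v(2\eta_j)\leqslant T_{k_j}^{-1}e^{-(j+4)k_j},\qquad 2\eta_j\leqslant e^{-j(2k_j+1)} .
\end{equation*}
This is possible because $T_{k_j}^{-1}e^{-(j+4)k_j}>0$. I then set $\varepsilon_j:=2\eta_j$; since $\omega_v$ is nondecreasing, \eqref{j2} and \eqref{j3} hold automatically.

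Next comes the induction. Start with any $k_1$, for instance $k_1=1$, and get $\eta_1$ from the rule above. Given $k_j$ and $\eta_j$, choose $k_{j+1}>k_j+j$ with
\begin{equation*}
\|\theta_{k_{j+1}}-\theta_{k_j}\|_{\mathbb{T}}<\min\{\eta_j,\eta_{j-1}/2,\dots,\eta_1/2^{j}\}\cdot 2^{-1},
\end{equation*}
and then define $\eta_{j+1}$ by the rule, additionally requiring $\eta_{j+1}\leqslant\eta_j/2$. It is simpler to impose $\|\theta_{k_{i+1}}-\theta_{k_i}\|_{\mathbb{T}}<\eta_i/2$ together with $\eta_{i+1}\leqslant \eta_i/2$ for every $i$. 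Then for $m>j$, summing the increments gives
\begin{equation*}
\|\theta_{k_m}-\theta_{k_j}\|_{\mathbb{T}}\leqslant\sum_{i\geqslant j}\eta_i/2\leqslant\eta_j .
\end{equation*}
Hence $(\theta_{k_j})$ is Cauchy in the compact space $\mathbb{T}$ and has a limit $\theta$. Passing to the limit $m\to\infty$ gives $\|\theta-\theta_{k_j}\|_{\mathbb{T}}\leqslant\eta_j\leqslant\varepsilon_j$, which is \eqref{j1}. Also $\varepsilon_j\to0$ by \eqref{j3}, and $k_j\to\infty$ because the $k_j$ are strictly increasing.

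The only delicate point is ordering the choices so that no circularity arises. Each $\varepsilon_j$ depends only on $k_j$, while the later indices $k_{j+1},k_{j+2},\dots$ are chosen afterwards to lie within the geometric tail of tolerances. With that ordering, one limit point serves all $j$ at once. For density, which the main theorem needs later, the same construction can be started with $k_1$ chosen so that $\theta_{k_1}$ lies in a prescribed open set, with $\eta_1$ small enough that $\theta$ stays in that set.
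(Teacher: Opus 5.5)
Your proposal is correct and follows essentially the same route as the paper. Both arguments choose the tolerance $\varepsilon_j$ from $k_j$ via uniform continuity of $v$ and halve it at each step. Both then take $k_{j+1}=k_j+q_j$ with $\|q_j\alpha\|_{\mathbb{T}}$ below half the current tolerance, and sum the geometric tail to get $\|\theta-\theta_{k_j}\|_{\mathbb{T}}\leqslant\varepsilon_j$ for the limit phase. Your closing remark on density, starting with $\theta_{k_1}$ inside a prescribed open set, is a valid alternative to the paper's use of the orbit $\{\theta+n\alpha\}$, though the lemma itself does not need it.
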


\begin{proof}
    We let $\varepsilon_{0}=1$ and choose $k_{1}$ sufficiently large. This finishes the first step construction.

    Assume that we already have $\varepsilon_{j-1}$ and $k_{j}$. We construct iteratively $\varepsilon_{j}$ and $k_{j+1}$. We first choose $\varepsilon_{j}>0$ such that
    \begin{equation*}
        \omega_{v}(\varepsilon_{j})\leqslant T_{k_{j}}^{-1}e^{-(j+4)k_{j}}\quad \text{and}\quad  \varepsilon_{j}\leqslant \min \bigg\{e^{-j(2k_{j}+1)},\frac{\varepsilon_{j-1}}{2}\bigg\}.
    \end{equation*}
    Next we define
    \begin{equation*}
        k_{j+1}=k_{j}+q_{j},
    \end{equation*}
    where $q_{j}\in\mathbb{N}$ is positive such that
    \begin{equation}\label{qj}
        \|q_{j}\alpha\|_{\mathbb{T}}\leqslant \frac{\varepsilon_{j}}{2}.
    \end{equation}
This finishes the construction and proves \eqref{j2} and \eqref{j3}.

    It is obvious that by \eqref{qj}, for every $j\geqslant 1$,
    \begin{equation}\label{cauchy}
    \|\theta_{k_{j+1}}-\theta_{k_{j}}\|_{\mathbb{T}}\leqslant \|q_{j}\alpha\|_{\mathbb{T}}\leqslant \frac{\varepsilon_{j}}{2}.
\end{equation}
The sequence $\{\theta_{k_{j}}\}$ is Cauchy sequence. Let
\begin{equation*}
    \theta=\lim_{j\to\infty} \theta_{k_{j}}.
\end{equation*}
by \eqref{cauchy}, we have
\begin{equation*}
    \|\theta-\theta_{k_{j}}\|_{\mathbb{T}}\leqslant \sum_{i= j}^{\infty} \|q_{i}\alpha\|_{\mathbb{T}}\leqslant \varepsilon_{j}.
\end{equation*}
This proves \eqref{j1}.
\end{proof}

\begin{remark}
    The phase $\theta$ constructed in \hyperref[phase]{Lemma~\ref*{phase}} satisfies 
    \begin{equation*}
        \delta(\alpha,\theta)\coloneqq \limsup_{|k|\to\infty} \frac{-\ln \|2\theta+k\alpha\|_{\mathbb{T}}}{|k|}=\infty.
    \end{equation*}
    Indeed, combining $\|2\theta_{k_{j}}+(2k_{j}+1)\alpha\|_{\mathbb{T}}=0$ with \eqref{j1} implies that
    \begin{equation*}
        \|2\theta+(2k_{j}+1)\alpha\|_{\mathbb{T}}\leqslant 2\varepsilon_{j}\leqslant 2e^{-j(2k_{j}+1)}.
    \end{equation*}
    Thus, 
    \begin{equation*}
        \delta(\alpha,\theta)
            \geqslant \lim_{j\to \infty} \frac{-\ln \|2\theta+(2k_{j}+1)\alpha\|_{\mathbb{T}}}{2k_{j}+1}=\infty.
    \end{equation*}
\end{remark}

\subsection{Criterion for lower bound}

\begin{proof}[Proof of {\hyperref[criteriontheorem]{Theorem~\ref*{criteriontheorem}}}]
We only need to consider one fixed phase $\theta\in\mathbb{T}$ since the shift of this phase produce a dense subset $\Theta=\cup_{n\in\mathbb{Z}}\{\theta+n\alpha\}$.

Let $\rho>0$ be arbitrarily fixed. Let $B$ be given by \eqref{B}, and let $k_{j}$, $\varepsilon_{j}$, and $\theta$ be given by \hyperref[phase]{Lemma~\ref*{phase}}. We define
\begin{equation*}
    T_{j}=T_{k_{j}}=B^{2k_{j}}e^{2\rho k_{j}}.
\end{equation*}

We first compare the dynamics of $H_{\theta}\coloneqq H_{v,\alpha,\theta}$ with that of the exactly reflected operator $H_{\theta_{k_{j}}}\coloneqq H_{v,\alpha,\theta_{k_{j}}}$. By \eqref{Lip}, we have
\begin{equation}\label{Lipepsilon}
    \|H_{\theta}-H_{\theta_{k_{j}}}\|\leqslant \omega_{v}(\|\theta-\theta_{k_{j}}\|_{\mathbb{T}})\leqslant \omega_{v}(\varepsilon_{j}).
\end{equation}
By Duhamel's formula, for any $t>0$,
\begin{equation}\label{duhamel}
    \|e^{-itH_{\theta}}-e^{-itH_{\theta_{k_{j}}}}\|\leqslant t\|H_{\theta}-H_{\theta_{k_{j}}}\|.
\end{equation}
Let $Q_{j}$ denote the orthogonal projection onto $\ell^{2}([k_{j}+1,\infty)\cap\mathbb{Z})$. Since both time evolutions preserve the $\ell^{2}$ norm, combining \eqref{Lipepsilon} with \eqref{duhamel} yields
\begin{equation}\label{Qerror}
    \begin{split}
        \big|\|Q_{j}e^{-itH_{\theta}}\delta_{0}\|^{2}-\|Q_{j}e^{-itH_{\theta_{k_{j}}}}\delta_{0}\|^{2}\big|&\leqslant 2t \|H_{\theta}-H_{\theta_{k_{j}}}\|\\
        &\leqslant 2t\omega_{v}(\varepsilon_{j}).
    \end{split}
\end{equation}

Using the identity
\begin{equation*}
    \frac{2}{T} \int_{0}^{\infty} te^{-2t/T}\,\mathrm{d}t=\frac{T}{2},
\end{equation*}
we deduce from \eqref{Qerror} that
\begin{equation*}
    \begin{split}
        \bigg|\frac{2}{T_{j}}\int_{0}^{\infty}e^{-2t/T_{j}}\|Q_{j}e^{-itH_{\theta}}\delta_{0}\|^{2}\,\mathrm{d}t- \frac{2}{T_{j}}\int_{0}^{\infty}e^{-2t/T_{j}}&\|Q_{j}e^{-itH_{\theta_{k_{j}}}}\delta_{0}\|^{2}\,\mathrm{d}t\bigg|\\
        &\leqslant  T_{j}\omega_{v}(\varepsilon_{j}).
    \end{split}
\end{equation*}

Applying \hyperref[Qk]{Corollary~\ref*{Qk}} together with \hyperref[phase]{Lemma~\ref*{phase}}, we find that for sufficiently large $j$,
\begin{equation}\label{abeltaillower}
    \begin{split}
        &\quad \frac{2}{T_{j}}\int_{0}^{\infty}e^{-2t/T_{j}} \sum_{n\geqslant k_{j}+1}|\langle \delta_{n},e^{-itH_{\theta}}\delta_{0}\rangle |^{2}\,\mathrm{d}t
        \\
        &\geqslant \frac{2}{T_{j}}\int_{0}^{\infty}e^{-2t/T_{j}} \sum_{n\geqslant k_{j}+1}|\langle \delta_{n},e^{-itH_{\theta_{k_{j}}}}\delta_{0}\rangle |^{2}\,\mathrm{d}t -T_{j}\omega_{v}(\varepsilon_{j})\\
        &\geqslant \frac{1}{4}- e^{-(j+4)k_{j}}\\
        &>\frac{1}{5}.
    \end{split}
\end{equation}

Finally, by \eqref{avmoment},
\begin{equation*}
    \begin{split}
        \langle |\widetilde{X}_{\theta}|^{p}\rangle(T_{j})&\geqslant(k_{j}+1)^{p}\frac{2}{T_{j}}\int_{0}^{\infty}e^{-2t/T_{j}}\sum_{n\geqslant k_{j}+1}|\langle\delta_{n},e^{-itH_{\theta}}\delta_{0}\rangle|^{2}\,\mathrm{d}t\\
        &\geqslant\frac{1}{5}(k_{j}+1)^{p}.
    \end{split}
\end{equation*}
Since $\ln T_{j}=C_{0}k_{j}$, where $C_{0}=2\ln B+2\rho$, we obtain
\begin{equation*}
    \langle |\widetilde{X}_{\theta}|^{p}\rangle(T_{j})\geqslant\frac{1}{5C_{0}^{p}}(\ln T_{j})^{p}.
\end{equation*}
This proves the lower bound for Abel average $\langle |\widetilde{X}_{\theta}|^{p}\rangle$.

Now we prove the lower bound for $\langle |X_{\theta}|^{p}\rangle$. By \eqref{abeltaillower}, we have
\begin{equation*}
    \frac{2}{T_{j}}\int_{0}^{\infty}e^{-2t/T_{j}} \sum_{n\geqslant k_{j}+1}|\langle \delta_{n},e^{-itH_{\theta}}\delta_{0}\rangle |^{2}\,\mathrm{d}t\geqslant \frac{1}{5}.
\end{equation*}
A direct calculation shows that
\begin{equation*}
    \frac{2}{T_{j}}\bigg(\int_{0}^{T_{j}/100}+\int_{100T_{j}}^{\infty}\bigg)e^{-2t/T_{j}} \,\mathrm{d}t < \frac{1}{10}.
\end{equation*}
Consequently,
\begin{equation}\label{midpart}
    \frac{2}{T_{j}}\int_{T_{j}/100}^{100T_{j}}e^{-2t/T_{j}} \sum_{n\geqslant k_{j}+1}|\langle \delta_{n},e^{-itH_{\theta}}\delta_{0}\rangle |^{2}\,\mathrm{d}t\geqslant \frac{1}{10}.
\end{equation}

We claim that there exists $t_{j}\in [T_{j}/100, 100 T_{j}]$ such that
\begin{equation*}
    \sum_{n\geqslant k_{j}+1}|\langle \delta_{n},e^{-it_{j}H_{\theta}}\delta_{0}\rangle |^{2}\geqslant \frac{1}{10}.
\end{equation*}
Indeed, if this were not the case, then
\begin{equation*}
    \begin{split}
        \frac{2}{T_{j}}\int_{T_{j}/100}^{100T_{j}}e^{-2t/T_{j}} \sum_{n\geqslant k_{j}+1}|\langle \delta_{n},e^{-itH_{\theta}}\delta_{0}\rangle |^{2}\,\mathrm{d}t
        &\leqslant \frac{1}{10} \cdot \frac{2}{T_{j}}\int_{T_{j}/100}^{100T_{j}}e^{-2t/T_{j}} \,\mathrm{d}t\\
        &< \frac{1}{10},
    \end{split}
\end{equation*}
which contradicts \eqref{midpart}.

Thus, along the sequence $t_{j}\to\infty$, we obtain
\begin{equation*}
    \langle |X_{\theta}|^{p}\rangle(t_{j})\geqslant (k_{j}+1)^{p}\sum_{n\geqslant k_{j}+1}|\langle \delta_{n},e^{-it_{j}H_{\theta}}\delta_{0}\rangle |^{2} \geqslant \frac{1}{10(2C_{0})^{p}} (\ln t_{j})^{p}.
\end{equation*}
This completes the proof.
\end{proof}

\subsection{Logarithmic transport exponent}

\begin{proof}[Proof of {\hyperref[sharpthm]{Theorem~\ref*{sharpthm}}}]
    We only give the proof for $\langle|X_{\theta}|^{p}\rangle$ since $\langle|\widetilde{X}_{\theta}|^{p}\rangle$ is similar. 

    By \hyperref[criteriontheorem]{Theorem~\ref*{criteriontheorem}}, we have
    \begin{equation*}
        \limsup_{t\to\infty}\frac{\ln \sup_{\theta}\langle |X_{\theta}|^{p}\rangle(t)}{p\ln \ln t} \geqslant \limsup_{j\to\infty}\frac{\ln \langle |X_{\theta}|^{p}\rangle(t_{j})}{p\ln \ln t_{j}} \geqslant 1.
    \end{equation*}
    On the other hand, for every $p>0$, \eqref{upper} gives
    \begin{equation*}
        \limsup_{t\to\infty}\frac{\ln \sup_{\theta}\langle |X_{\theta}|^{p}\rangle(t)}{p\ln \ln t} \leqslant \frac{p+\varepsilon}{p}.
    \end{equation*}
    Letting $\varepsilon\to 0$ finishes the proof.
\end{proof}

\subsection{Application to the almost Mathieu operator}

Let $H_{\lambda,\alpha,\theta}$ be the almost Mathieu operator given by \eqref{amo}. We first recall the following two results for completely resonant phases. 

For Diophantine frequencies, the following Anderson localization result was established in \cite{MR2121278}.

\begin{theorem}\label{AL}
    Assume that $\alpha\in\mathrm{DC}$, $2\theta\in\alpha\mathbb{Z}+\mathbb{Z}$, and $\lambda>1$. Then  $H_{\lambda,\alpha,\theta}$ exhibits Anderson localization.
\end{theorem}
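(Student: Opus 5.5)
The plan is to rerun Jitomirskaya's nonperturbative localization argument for the supercritical almost Mathieu operator. The point is that a completely resonant phase is non-resonant in the sense that argument needs, except at a single bounded index.

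Write $2\theta=-k_{0}\alpha \pmod 1$ for some $k_{0}\in\mathbb{Z}$. Then for $k\neq k_{0}$ we have $\|2\theta+k\alpha\|_{\mathbb{T}}=\|(k-k_{0})\alpha\|_{\mathbb{T}}$. Since $\alpha\in\mathrm{DC}$, this is bounded below by $\gamma|k-k_{0}|^{-\tau}\geqslant e^{-\varepsilon|k|}$ for all $|k|$ large. Thus $\theta$ violates the usual condition $\delta(\alpha,\theta)=0$ only through the single resonance $k=k_{0}$, and the condition $\liminf$-type lower bound on $|\sin\pi(2\theta+k\alpha)|$ holds for all $|k|\geqslant K(\varepsilon)$.

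First, by Shnol/Berezanskii it suffices to show that every polynomially bounded generalized eigenfunction $\psi$ with $E$ in the spectrum decays exponentially. Positivity of the Lyapunov exponent, $L=\ln\lambda>0$, gives the large deviation bound for $\ln|P_{q}(\theta)|$, where $P_{q}$ is the determinant of $H$ restricted to a box of length $q$. The standard block-resolvent expansion then shows the following: a point $y$ is $(m,q)$-regular (good Green's function decay on a box of size $q\sim$ scale containing $y$) unless the even polynomial $P_{q}$, written as a polynomial in $\cos2\pi(\theta+\cdot)$ of degree $q$, is small at $q+1$ well-spread points. By Lagrange interpolation this cannot happen when the points $\theta+j\alpha$ are uniformly distributed. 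Two things are needed: small divisors $\|(j_{1}-j_{2})\alpha\|$, controlled by $\mathrm{DC}$, and $\|2\theta+(j_{1}+j_{2})\alpha\|$, controlled by the phase.

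For the origin (or a neighbourhood of it), singularity is automatic by polynomial boundedness, exactly as in Jitomirskaya's proof. For $y$ with $|y|$ large, the index sets used have $j_{1}+j_{2}$ of size comparable to $|y|$. Hence $j_{1}+j_{2}\neq k_{0}$, and the phase lower bound above applies with loss $e^{-\varepsilon|y|}$, which is absorbed. This forces $y$ to be regular for all $|y|\geqslant Y(\varepsilon)$. Iterating the Poisson formula over regular boxes then yields $|\psi(y)|\leqslant Ce^{-(L-\varepsilon')|y|}$.

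The main obstacle is the bookkeeping in the uniformity step. I need to check that the sums $j_{1}+j_{2}$ arising for boxes around $y$ indeed avoid $k_{0}$, or at least stay a distance $\gtrsim|y|$ from any bad value. I expect this to reduce to choosing the two intervals (one around the origin, one around $y$) so that all pairwise sums lie in $[|y|/2,3|y|]$ up to sign.

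If a residual resonance persisted, I would fall back on the reflection symmetry $n\mapsto -k_{0}-n$, as in Lemma~\ref{eigenbasis}. Each $\psi$ is then even or odd about $-k_{0}/2$. This lets me treat the two mirror centres jointly, which removes the resonant pair from the interpolation.
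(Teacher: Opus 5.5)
The paper does not prove this statement; it quotes it from \cite{MR2121278}. Your outline follows Jitomirskaya's scheme, but it breaks at the step you call bookkeeping, and that step is the whole difficulty of the completely resonant case.

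The interpolation contradiction uses \emph{all} pairs among the $k+1$ nodes $\cos 2\pi(\theta+c\alpha)$, where $c$ runs over the centers of boxes around $0$ and around $y$. The Lagrange denominators therefore also contain $|\sin\pi(2\theta+(c_i+c_j)\alpha)|$ for pairs with \emph{both} boxes around the origin. For those pairs $c_i+c_j$ is not of size $|y|$. It fills an interval of length comparable to $k$ around $0$, so it hits $k_0$. There the factor is exactly $0$, not $e^{-\varepsilon|y|}$. The two boxes are mirror images under $n\mapsto k_0-n$, so they have identical determinants and give the \emph{same} node. (Note the sign: with $2\theta\equiv -k_0\alpha$ the symmetry is $n\mapsto k_0-n$, about $k_0/2$, not $n\mapsto -k_0-n$.) With the usual interval of positions around the origin, roughly half of the origin's nodes are such duplicates. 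So you do not have $k+1$ distinct points at which the degree-$k$ polynomial is small, and Lagrange interpolation gives no contradiction. Pairs inside the origin block have sums of size $O(k)$ of both signs, so placing all pairwise sums in $[|y|/2,3|y|]$ is impossible for the standard choice of blocks.

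The parity fallback does not repair this. Regularity, singularity and the interpolation nodes are properties of the operator restricted to boxes; they do not involve $\psi$. Knowing that $\psi$ is even or odd therefore says nothing new about the determinants and does not separate coinciding nodes. What is needed is to keep one box from each mirror pair, for instance only centers $c>k_0/2$ in the origin block. Then every sum $c_i+c_j$ avoids $k_0$, and the remaining small divisors are controlled by $\mathrm{DC}$ with polynomial loss. The price is that the origin now contributes only about half of its admissible box positions. You then need a one-sided origin block and a longer block around $y$, so that there are still $k+1$ distinct nodes. In this scheme that requires each singular point to have more than roughly $2k/3$ admissible box positions, i.e.\ a sufficiently small edge buffer in your definition of regularity. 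You must then recheck the uniformity estimates (numerator and denominator) for this asymmetric node set. None of this is in the proposal, so as written it does not establish localization at completely resonant phases.
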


The following result was essentially proved in \cite{MR4216568,MR4756946}. The version formulated here is taken from \cite{jitomirskaya2024sharp}*{Corollary 3.3}.

\begin{theorem}\label{JLM}
    For every $\varepsilon>0$, there exists $K=K(\alpha,\lambda,\varepsilon)>0$ such that the following holds. Let $\phi_{s}$ be a normalized eigenfunction of $H_{\lambda,\alpha,\theta}$ and let $m_{s}$ be global maximum point of $|\phi_{s}|$. 
    Denote
    \begin{equation*}
        U^{\phi_{s}}(n)=\begin{pmatrix}
            \phi_{s}(n)\\
            \phi_{s}(n-1)
        \end{pmatrix}.
    \end{equation*}
    Let $k\geqslant K$. Assume $k_{0}\in [-30 k, 30 k]$ satisfies
    \begin{equation*}
        |\sin \pi (2\theta+k_{0}\alpha)|=\min_{|x|\leqslant 30 k} |\sin \pi (2\theta+x\alpha)|.
    \end{equation*}
    If $|m_{s}|\leqslant k$ and $|n|\leqslant 3k$,
    \begin{equation*}
        \begin{split}
            \|U^{\phi_{s}}(n)\|\leqslant e^{\varepsilon k} \max \big\{&\|U^{\phi_{s}}(m_{s})\| e^{-(\ln \lambda-\varepsilon)|n-m_{s}|},\\
            &\|U^{\phi_{s}}(k_{0}-m_{s})\|e^{-(\ln \lambda-\varepsilon)|n-(k_{0}-m_{s})|}\big\}.
        \end{split}
    \end{equation*}
\end{theorem}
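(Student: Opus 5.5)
Since \hyperref[JLM]{Theorem~\ref*{JLM}} is imported from \cite{MR4216568,MR4756946} and \cite{jitomirskaya2024sharp}, I would reprove it with the Jitomirskaya--Liu scheme of block Green's function expansions and Lagrange interpolation, keeping every constant uniform in $\theta$ and $s$.

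\emph{Step 1: regular points.} Let $P_{\ell}(\theta)=\det(H_{\lambda,\alpha,\theta}-E)|_{[0,\ell-1]}$. This is an even trigonometric polynomial of degree $\ell$ in $\theta+\frac{\ell-1}{2}\alpha$. Since the Lyapunov exponent equals $\ln\lambda$ for every $E$ in the spectrum (Herman, Bourgain--Jitomirskaya), uniform upper semicontinuity gives $|P_{\ell}|\leqslant e^{(\ln\lambda+\varepsilon)\ell}$ uniformly in $\theta$. Call $y$ \emph{$(\varepsilon,\ell)$-regular} if there is an interval $I=[x_{1},x_{2}]\ni y$ of length $\ell-1$ with $\operatorname{dist}(y,\partial I)\geqslant\ell/5$ and
\begin{equation*}
|G_{I}(y,x_{i})|\leqslant e^{-(\ln\lambda-\varepsilon)|y-x_{i}|}.
\end{equation*}
By Cramer's rule, it suffices to find such an $I$ with $|P_{\ell}(\theta+x_{1}\alpha)|\geqslant e^{(\ln\lambda-\varepsilon/10)\ell}$.

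\emph{Step 2: at most two singular clusters.} I would use Lagrange interpolation: if $|P_{\ell}|$ were small at $\ell+1$ points $\cos2\pi(\theta+(x_{j}+\frac{\ell-1}{2})\alpha)$, the interpolation formula together with the Avila--Jitomirskaya type lower bound on the average of $\ln|P_{\ell}|$ would force the Lagrange basis products to be large. Those products are controlled by $\prod|\sin\pi(x_{i}-x_{j})\alpha|$ and by $\prod|\sin\pi(2\theta+(x_{i}+x_{j}+\ell-1)\alpha)|$. The first product is harmless because $\alpha\in\mathrm{DC}$, at the cost of a factor $e^{\varepsilon k}$. The second product is small only when $x_{i}+x_{j}$ is close to the resonance index $k_{0}$, the minimizer of $|\sin\pi(2\theta+x\alpha)|$ over $|x|\leqslant30k$. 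Hence, for $|y|\leqslant3k$ with $\ell\approx\varepsilon' |y-m_{s}|$, a point $y$ can fail to be regular only if it is close to $m_{s}$ or close to its reflection $k_{0}-m_{s}$. In the first case we use that $m_{s}$ is a global maximum, hence singular, and the set $\{m_{s},y\}$ cannot be simultaneously resonant. In the second case we use the reflection pairing. This is the dichotomy that produces the two terms in the maximum of \hyperref[JLM]{Theorem~\ref*{JLM}}.

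\emph{Step 3: decay.} For $n$ regular, the Poisson formula gives
\begin{equation*}
\phi_{s}(n)=-G_{I}(n,x_{1})\phi_{s}(x_{1}-1)-G_{I}(n,x_{2})\phi_{s}(x_{2}+1).
\end{equation*}
I would iterate this along regular points, stopping once the expansion enters an $\varepsilon k$-neighbourhood of $m_{s}$ or of $k_{0}-m_{s}$, or after about $k/\ell$ steps. The number of paths is at most $2^{O(1/\varepsilon)}$ per unit of length, which contributes only $e^{\varepsilon k}$. Each path terminating near one of the two centers is bounded by $\|U^{\phi_{s}}(\cdot)\|$ at that center times $e^{-(\ln\lambda-\varepsilon)\cdot\text{distance}}$. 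Paths that run to length $\sim k$ contribute $e^{-ck}\|\phi_{s}\|_{\infty}$. Because $|\phi_{s}(m_{s})|$ is maximal, this term is absorbed into the first term of the maximum after enlarging $e^{\varepsilon k}$.

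The main obstacle is Step 2. One must make the Lagrange interpolation argument quantitative uniformly in $\theta$, precisely localize the unique resonant pair at $k_{0}$, and show that no third cluster of singular points can appear within $[-3k,3k]$. I would first try to follow \cite{MR4216568}'s counting of resonant index sums, combining the Diophantine condition with the minimality defining $k_{0}$.
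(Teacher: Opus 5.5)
The paper gives no proof of this statement; it quotes it from the Jitomirskaya--Liu reflective-hierarchical work, in the form of Corollary~3.3 of Jitomirskaya--Liu--Mi. Your scheme is the one used there: regularity via Cramer's rule, Lagrange interpolation in the nodes $z=\cos2\pi(\theta+(x+\frac{\ell-1}{2})\alpha)$, and a block expansion. As written, however, Step~2 has a gap at its centre.

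\textbf{Step 2: a possible third singular cluster at $k_0/2$.} Write $c=x+\frac{\ell-1}{2}$ for the centres. Two centres $c_i,c_j$ give nodes at distance $\lesssim|\sin\pi(2\theta+(c_i+c_j)\alpha)|$. Among index sums in $[-30k,30k]$, only $c_i+c_j=k_0$ fails to be polynomially bounded below; for the paper's phase $\theta_0=-\alpha/2$ one has $k_0=1$, and such nodes coincide exactly.
\begin{itemize}
\item You only considered resonant pairs \emph{across} the two clusters, i.e.\ $y\approx k_0-m_s$. Pairs \emph{inside} one cluster also occur whenever $y$ (or $m_s$) lies within $O(\ell)$ of the fixed point $k_0/2$ of the reflection.
\item With your margin $\ell/5$, a singular $y$ supplies centres in $[y-0.3\ell,y+0.3\ell]$. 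If $k_0/2$ sits in the middle of this window, only about $0.3\ell$ of them are distinct modulo $c\mapsto k_0-c$.
\item Together with the $0.6\ell$ nodes from $m_s$, you then have about $0.9\ell<\ell+1$ usable nodes, and interpolation gives nothing.
\end{itemize}
So the neighbourhood of $k_0/2$ is a possible third singular cluster that your dichotomy does not exclude. In the application this is the region around $n\in\{0,1\}$, where reflective SULE needs $|\phi_s(n)|\lesssim e^{(\epsilon-\gamma)|m_s|}$.

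It can be excluded, but this needs an extra step. Because $m_s$ is a global maximum, it is singular for every margin $a\ell$; otherwise $|\phi_s(m_s)|\leqslant 2e^{-(\ln\lambda-\varepsilon)a\ell}|\phi_s(m_s)|$. Hence its cluster can be made to contribute $(1-2a)\ell$ nodes, and $(1-2a)\ell+0.3\ell>\ell+1$ for small $a$. The case $m_s\approx k_0/2$ is handled similarly. Alternatively, shrink the margin in the definition of regularity.

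\textbf{Step 3 does not close as stated.} Stopping after about $k/\ell$ steps leaves a remainder $e^{-(\ln\lambda-\varepsilon)L}\|\phi_s\|_\infty$, where the accumulated length $L$ is only guaranteed to be $\gtrsim k/5$.
\begin{itemize}
\item Absorbing this into $e^{\varepsilon k}\|U^{\phi_s}(m_s)\|e^{-(\ln\lambda-\varepsilon)|n-m_s|}$ requires $L\geqslant|n-m_s|-O(\varepsilon k)$. Since $|n-m_s|$ can be as large as $4k$, enlarging $e^{\varepsilon k}$ cannot compensate a loss of order $e^{k\ln\lambda}$.
\item The right stopping rule is: stop when the accumulated step length exceeds $|n-m_s|$, or when the path enters the $O(\varepsilon k)$-neighbourhood of $m_s$ or $k_0-m_s$. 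The remainder is then bounded directly by the first term of the maximum.
\item Such paths travel up to distance about $4k$ from $n$, i.e.\ to $|y|\approx 7k$. Your regularity statement is only for $|y|\leqslant 3k$, so even your own paths of length $\sim k$ leave the window where Step~2 applies.
\item Step~2 must therefore be proved on the larger window. This is possible precisely because $k_0$ minimizes over $|x|\leqslant 30k$: all index sums $c_i+c_j$ then stay in $[-30k,30k]$. There the Diophantine condition together with minimality gives $\|2\theta+m\alpha\|_{\mathbb{T}}\geqslant\frac12\|(m-k_0)\alpha\|_{\mathbb{T}}$ for $m\neq k_0$, which is only polynomially small in $k$.
\end{itemize}
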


We now finish the proof.
\begin{proof}[Proof of {\hyperref[amothm]{Theorem~\ref*{amothm}}}]
Fix  $\alpha\in\mathrm{DC}$ and  $\lambda>1$. By \cite{MR1933451}, the Lyapunov exponent of the almost Mathieu operator is $\ln \lambda>0$ in its spectrum. Let $\theta_{0}=-\alpha/2$. We write $H_{0}=H_{\lambda,\alpha,\theta_{0}}$ for brevity. By \hyperref[criteriontheorem]{Theorem~\ref*{criteriontheorem}} and \hyperref[sharpthm]{Theorem~\ref*{sharpthm}}, we only need to prove that $H_{0}$ satisfies reflective SULE.

By \hyperref[AL]{Theorem~\ref*{AL}}, the operator $H_{0}$ has Anderson localization, i.e. pure point with exponential decay eigenfunctions. Thus $H_{0}$ has a complete orthonormal basis $\{\phi_{s}\}_{s\in\mathbb{Z}}$ with corresponding eigenvalues $\{E_{s}\}_{s\in\mathbb{Z}}$.

Now we study $\phi_{s}$. Fix any $0<\epsilon<\frac{1}{2}\ln\lambda$. For every $s$, choose a global maximum point $m_{s}\leqslant0$ of $|\phi_{s}|$. Apply \hyperref[JLM]{Theorem~\ref*{JLM}} to the operator $H_{0}$. Note that $k_{0}=1$ since
    \begin{equation*}
        |\sin \pi (2\theta_{0}+\alpha)|=0.
    \end{equation*}
    Let $K=K(\alpha,\lambda,\frac{\epsilon}{2})>0$ be the constant in  \hyperref[JLM]{Theorem~\ref*{JLM}}. 
    Fix $s$ and $n$, we take
    \begin{equation}\label{defk}
        k=\max \bigg\{K, |m_{s}|, \bigg[\frac{|n|}{3}\bigg]+1\bigg\}.
    \end{equation}
    Thus we have $k\geqslant K$, $|m_{s}|\leqslant k$, and $|n|\leqslant 3k$, thus all the conditions in  \hyperref[JLM]{Theorem~\ref*{JLM}} are satisfied. Hence for any $s,n\in\mathbb{Z}$,
    \begin{equation*}
        \begin{split}
            \|U^{\phi_{s}}(n)\|\leqslant e^{\frac{\epsilon}{2} k} \max \{&\|U^{\phi_{s}}(m_{s})\| e^{-(\ln \lambda-\frac{\epsilon}{2})|n-m_{s}|},\\
            &\|U^{\phi_{s}}(1-m_{s})\|e^{-(\ln \lambda-\frac{\epsilon}{2})|n-(1-m_{s})|}\}.
        \end{split}
    \end{equation*}
    By $\|\phi_{s}\|_{\ell^{2}}=1$ and \eqref{defds},
    we have
    \begin{equation}\label{decayphi}
        |\phi_{s}(n)|\leqslant 2e^{\frac{\epsilon}{2} k}e^{-(\ln\lambda -\frac{\epsilon}{2}) d_{s}(n)}.
    \end{equation}
    Next, we estimate $k$. By \eqref{defds}, if $d_{s}(n)=|n-m_{s}|$, then
    \begin{equation*}
        |n|\leqslant |m_{s}|+d_{s}(n).
    \end{equation*}
    If $d_{s}(n)=|n-(1-m_{s})|$, then
    \begin{equation*}
        |n|\leqslant |1-m_{s}|+d_{s}(n).
    \end{equation*}
    Thus for both cases, we have $|n|\leqslant |m_{s}|+d_{s}(n)+1$. By \eqref{defk}, we have
    \begin{equation}\label{estk}
        k\leqslant K+|m_{s}|+1+|n|\leqslant K+2+2 |m_{s}|+d_{s}(n).
    \end{equation}
    Substitute \eqref{estk} into \eqref{decayphi}, one obtain
    \begin{equation*}
        |\phi_{s}(n)|\leqslant 2e^{\frac{\epsilon}{2}(K+2)} e^{\epsilon |m_{s}|}e^{-(\ln \lambda-\epsilon)d_{s}(n)}.
    \end{equation*}
    Setting $\gamma=\ln\lambda-\epsilon$ and $C_{\epsilon}=2e^{\frac{\epsilon}{2}(K+2)}$, we conclude that $H_{0}$ has reflective SULE by \hyperref[reflectivelocalization]{Definition~\ref*{reflectivelocalization}}. 
\end{proof}

\section*{Acknowledgments}
Part of this work was completed while Wencai Liu was visiting the Simons Institute for the Theory of Computing. The authors thank Shiwen Zhang for helpful comments on an earlier version of the manuscript. This research was supported in part by NSF grant DMS-2246031.

\section*{Statements and Declarations}
{\bf Conflict of Interest} 
The authors declare no conflicts of interest.
				
\vspace{0.2in}
{\bf Data Availability}
Data sharing is not applicable to this article as no new data were created or analyzed in this study.

\bibliography{main}
\end{document}